\documentclass[11pt,letterpaper,leqno]{article}

\usepackage{amsmath}
\usepackage{epsfig,graphics}
\usepackage{etoolbox}
\usepackage{palatino}
\usepackage{subcaption}
\usepackage{multicol}
\usepackage[hyperfootnotes=false]{hyperref} %hyperlinks comment for arxiv
\hypersetup{pdfborder=0 0 0} %So there is no border around the hyperlink
\hypersetup{colorlinks=true, urlcolor=MidnightBlue, citecolor=MidnightBlue, linkcolor=MidnightBlue}
\usepackage{mathrsfs}
\usepackage{multirow}
\usepackage{graphicx}
\usepackage{caption}
\usepackage{subcaption}
\usepackage[bottom,multiple]{footmisc}
\usepackage{ragged2e}
\def\citeapos#1{\citeauthor{#1}'s (\citeyear{#1})}

\usepackage[usenames,dvipsnames,svgnames,table]{xcolor}

\makeatletter
\def\@footnotecolor{gray!70!black}
\define@key{Hyp}{footnotecolor}{%
 \HyColor@HyperrefColor{#1}\@footnotecolor%
}
\patchcmd{\@footnotemark}{\hyper@linkstart{link}}{\hyper@linkstart{footnote}}{}{}
\makeatother
\hypersetup{footnotecolor=black}

\makeatletter
\renewcommand\@biblabel[1]{}

\makeatother

\usepackage{amsthm, thmtools}
\usepackage{
nameref,%\nameref
hyperref,%\autoref
% n.b. \Autoref is defined by thmtools
cleveref,% \cref
  % n.b. cleveref after! hyperref
}

\usepackage{amscd}
\usepackage{amsthm}
\usepackage{amsfonts}
\usepackage{tikz}
\usepackage{float}
\usetikzlibrary{calc,intersections,through,backgrounds,shadings}
\usetikzlibrary{shapes.geometric}
\usetikzlibrary{through}
\usetikzlibrary{arrows.meta}
\usepackage{multirow}
\usepackage{mathtools}
\usepackage{amssymb}
\usepackage{caption}
\usepackage{xfrac}  
\usepackage[margin=.9in]{geometry}

\newcommand\nc{\newcommand}
\nc\on{\operatorname}

\theoremstyle{plain}

\newtheorem{prop}{Proposition}
\newtheorem{lem}{Lemma}

\usepackage{pgfplots}

\theoremstyle{definition}

\usepackage{epigraph}
\usepackage{natbib}
\usepackage{graphicx}
\usepackage{placeins}
\graphicspath{ {images/} }
\usepackage{etoolbox}
\usepackage{amsmath}

\usepackage[utf8]{inputenc}
\usepackage[english]{babel}
 \usepackage{bm}
\setlength{\parindent}{1.2em}
\setlength{\parskip}{.3em}

\usetikzlibrary{shapes.geometric}
{ }

\newtheorem*{mydef1}{Definition - Complements and Substitutes}
\newtheorem*{mydef2}{Definition - Supermodularity and Submodularity}
\newtheorem*{mydef3}{Definition - Gross Complements and Gross Substitutes}
\newtheorem*{myrem}{Remark 1}
\newtheorem*{myrem2}{Remark 2}

\usepackage{enumitem}
\setlength\footnotemargin{1em}
\setlength{\footnotesep}{0.35cm}
\usepackage{chngcntr}
\usepackage{apptools}
\AtAppendix{\counterwithin{prop}{section}}
\AtAppendix{\counterwithin{lem}{section}}
\AtAppendix{\counterwithin{cor}{section}}
\AtAppendix{\counterwithin{cl}{section}}
\newcommand{\bx}{\boldsymbol{x}}
\newcommand{\bv}{\boldsymbol{v}}
\newcommand{\bpi}{\boldsymbol{\pi}}
\newcommand{\bt}{\boldsymbol{t}}
\newcommand{\bp}{\boldsymbol{p}}
\newcommand{\bq}{\boldsymbol{q}}
\DeclareMathOperator*{\argmaxA}{arg\,max}

\begin{document}

\title{Revisiting the Impact of Upstream Mergers with Downstream Complements and Substitutes}
\author{Enrique Ide\thanks{IESE Business School, Carrer de Arn\'{u}s i de Gar\'{i} 3-7, 08034 Barcelona, Spain (email: eide@iese.edu). I am deeply grateful for the invaluable comments and suggestions provided by Nano Barahona, Daniel Barron, Juan-Pablo Montero, Sebasti\'{a}n Otero, Eduard Talam\`{a}s, Heski Bar-Isaac (the Editor), three anonymous referees, and seminar participants at Paris-Saclay. I am also thankful for the support provided by the MCIN/AEI /10.13039/501100011033/ FEDER, UE for the project with Ref. PID2021-123113NB-I00. I declare that I have no relevant or material financial interests that relate to the research described in this paper. A previous version of this paper circulated under the title ``Cross-Market Mergers with Common Customers: When (and Why) Do They Increase Negotiated Prices?''}}
\date{\today }
\maketitle

\begin{abstract} I examine how upstream mergers affect negotiated prices when suppliers bargain with a monopoly intermediary selling products to final consumers. Conventional wisdom holds that such transactions lower negotiated prices when the products are complements for consumers and raise them when they are substitutes. The idea is that consumer demand relationships carry over to upstream negotiations, where mergers between complements weaken the suppliers’ bargaining leverage, while mergers between substitutes strengthen it. I challenge this view, showing that it breaks down when the intermediary sells products beyond those of the merging suppliers. In such cases, the merging suppliers' products may act as substitutes for the intermediary even if they are complements for consumers, or as complements for the intermediary even if they are substitutes for consumers. These findings show that upstream conglomerate mergers can raise prices without foreclosure or monopolisation and help explain buyer-specific price effects resulting from such mergers. \end{abstract}

\vspace{5mm}

\noindent Keywords: Upstream Mergers, Conglomerate Mergers, Complements and Substitutes, Bargaining.

\noindent \textit{JEL} classifications: D43, L13, L40.

\newpage

\section{Introduction}

Consider a merger between two upstream suppliers negotiating with a monopoly intermediary selling products to final consumers. Conventional wisdom holds that such transactions reduce negotiated prices when the products are complements for consumers and raise them when they are substitutes. This reasoning reflects the idea that downstream complementarities or substitutabilities carry over to upstream negotiations, where a merger of complements weakens suppliers’ bargaining leverage and a merger of substitutes strengthens it \citep{inderst,shaffer2,nevo,gautam,baye}. 

This paper challenges that view. I show that the conventional wisdom holds only when the intermediary’s portfolio consists exclusively of the merging suppliers’ products. When the portfolio includes other products, the link between downstream demand relationships and upstream bargaining breaks down. In such cases, the merging products may act as substitutes for the intermediary even if they are complements for consumers, leading the merger to raise—not lower—negotiated prices. Conversely, they may act as complements for the intermediary even if they are substitutes for consumers, causing the merger to lower—not raise—negotiated prices.

To understand why, consider the following example. There are three products, and the intermediary is a monopoly retailer offering one-stop-shopping convenience. Consumers incur a positive shopping cost to visit the retailer, and this cost varies across the population. While the products are unrelated in consumption—neither direct substitutes nor complements—their demands are linked through one-stop-shopping economies. Specifically, all products are gross complements because a broader portfolio makes visiting the retailer more attractive. This, in turn, encourages more consumers to incur the shopping cost, increasing demand for all the products the retailer carries.

Because products 1 and 2 are gross complements, carrying one increases the retailer’s profit from the other by attracting more consumers. If these were the only items in the portfolio, they would also be complements from the retailer’s perspective, since the profit from adding both would exceed the sum of adding each individually. In this case, the conventional wisdom holds: downstream complementarities transfer upstream, and a merger between the suppliers of products 1 and 2 would lower negotiated prices.

However, when the retailer also carries product 3, an additional factor arises: because both products 1 and 2 help attract consumers to the store, each increases demand for product 3. As a result, products 1 and 2 can act as substitutes in generating demand for product 3—and thus in the profit spillovers they create. This occurs when the profit gain from product 3 generated by carrying both products is less than the sum of the gains from carrying each individually.

Consequently, due to this spillover effect, the retailer’s incremental profit from adding products 1 and 2 together may be smaller than the sum of the profits from adding each individually. In this case, products 1 and 2 are substitutes for the retailer in terms of profits, even though all three products are gross complements in the downstream market. When this happens, a merger between the suppliers of products 1 and 2 raises—rather than lowers—negotiated prices, contradicting conventional wisdom.

I generalize this insight to an arbitrary number of products, broader forms of downstream complementarities, and even to cases of downstream substitutabilities. In the latter case, I show that products 1 and 2 can be complements for the intermediary in terms of profit, even if all products are gross substitutes in the downstream market. For this to occur, products 1 and 2 must be complements in generating (negative) profit spillovers—meaning that the reduction in profit from the rest of the lineup caused by adding products 1 and 2 together is smaller than the sum of the reductions caused by adding each one individually.

Finally, I examine whether natural sufficient conditions exist for downstream demand relationships to carry over upstream. For downstream complementarities, the answer is affirmative: if all products are strict gross complements and their inverse demands are weakly supermodular, then the intermediary necessarily views them as strict complements in terms of profits. The reason is that the intermediary’s pre-optimisation profit function is then supermodular in quantities, and supermodularity is preserved under maximization. This result implies that a stronger condition than gross complementarity is needed for downstream complementarities to transfer upstream in the presence of third-party products. It also implies that under linear demands, any pair of products that are gross complements downstream are also complements for the retailer in terms of profits.

However, no analogous result holds for downstream substitutes: even if all products are strict gross substitutes and their inverse demand functions are weakly submodular, the products need not be substitutes for the retailer in terms of profits. This reflects an asymmetry between downstream complementarities and substitutabilities. The asymmetry arises because, although the intermediary’s pre-optimisation profit function is submodular under these conditions, submodularity is not preserved under maximization. As a result, it is possible to construct scenarios in which all products are strict gross substitutes and have weakly submodular inverse demands, yet some still act as complements for the retailer in terms of profits.

From an antitrust perspective, my results have important implications. First, they challenge the conventional wisdom asserting that downstream substitutabilities invariably exacerbate the competitive concerns of upstream mergers, while downstream complementarities mitigate them. For downstream substitutabilities, this misconception could lead to excessively stringent cost-synergy thresholds for merger approval, potentially blocking transactions that would enhance overall welfare. For downstream complementarities, it risks prematurely classifying a merger as efficiency-driven—without further evaluation—when the merger may instead be motivated by the prospect of securing higher negotiated prices.

Second, my findings call for a reassessment of how upstream conglomerate mergers—those involving suppliers of products that final consumers do not perceive as direct substitutes—are screened. To date, most theories stressing how portfolio effects can raise prices have relied on monopolisation or foreclosure arguments \citep[e.g.,][]{whinston,greenlee,molina,idemontero2024}. Hence, it is not surprising that antitrust authorities have been evaluating conglomerate mergers through those lenses. For instance, in its Guidelines for Non-Horizontal Mergers, the \cite{eu2008} states:
\begin{quote}
 The main concern in the context of conglomerate mergers is that of foreclosure. The combination of products...may confer on the merged entity the ability and incentive to leverage a
strong market position from one market to another by means of tying or bundling or other exclusionary practices.
\end{quote} However, my results show that upstream conglomerate mergers can raise prices without foreclosure or monopolisation, highlighting the need to revisit the evaluation criteria for these transactions.

Third, my results help explain buyer-specific price effects observed after upstream mergers. In the U.S. healthcare market, hospitals (“upstream suppliers”) negotiate prices with health insurers (“intermediaries”), who then offer plans to employers and households (“consumers”). After a merger between two competing hospitals in North Carolina, \cite{thompson} documents substantial variation in price changes among insurers: two insurers faced large price increases, one saw no change, and another experienced notable price decreases. Similarly, \citet{haas2009two} report insurer-specific price effects following two hospital mergers in Illinois.

Although neither study provides detailed insurer-level data---and therefore cannot directly confirm the exact reasons for insurer-specific differences---the observed variation is consistent with the portfolio mechanism proposed here:  differences in insurers’ networks likely affect whether the merging hospitals act as complements or substitutes for each insurer. Moreover, as I discuss in Section \ref{sec:implications}, alternative explanations struggle to account for this insurer-specific variation.

More broadly, the fact that upstream mergers can raise negotiated prices for some intermediaries while lowering them for others highlights their complex distributional consequences. Authorities should therefore recognize that merger effects may not be uniform across buyers. Accurately weighing these gains and losses typically requires a structural model, which can be costly to estimate. However, as I discuss in Section \ref{sec:implications}, consumer loss ratios—sometimes recoverable without a full structural model, using data from blackouts, delistings, or consumer surveys—can provide a useful first approximation for assessing whether a merger is likely to generate such mixed effects.

\subsection{Related Literature}

First and foremost, my paper contributes to the literature on the effects of upstream mergers when input prices are negotiated.\footnote{\cite{von}, \cite{dobson}, and \cite{chipty} analyse the impact of downstream mergers when input prices are negotiated, while  \cite{rogerson} and \cite{dopper} examine the effects of vertical integration.} This line of research originates with \cite{horn}, who examine upstream and downstream merger incentives in a setting where two downstream firms have interdependent demands. \cite{inderst} extend this work by studying how upstream and downstream mergers strategically influence the distribution of rents and incentives for technology adoption. Both articles find that upstream mergers increase the merging suppliers’ bargaining leverage when products are downstream substitutes but decrease it when the products are downstream complements.

More closely related is the work by \cite{baye}, who analyse a Hotelling-type model involving two suppliers, a monopoly retailer, and final consumers, some of whom are one-stop shoppers. They show that the presence of one-stop shoppers in the downstream market renders the suppliers’ products complements to the retailer. Consequently, an upstream merger of the two suppliers weakens the suppliers’ bargaining leverage, leading to lower negotiated prices.

My paper’s main departure from this line of work lies in considering intermediaries or retailers with portfolios containing more than two products.\footnote{There are several additional differences between these articles and my work. For example, \cite{horn} and \cite{baye} restrict their analyses to linear contracts, while \cite{inderst} focus exclusively on two-part tariffs/efficient contracts. In contrast, my results apply to both types of contracts. Furthermore, \cite{horn} features strategic interactions between the two downstream firms (their demands are interdependent), whereas I follow \cite{inderst} and \cite{baye} by abstracting from such interactions and instead focusing on a single monopoly retailer.} By moving beyond the two-product assumption prevalent in the prior literature, I demonstrate that the presence of third-party products can disrupt the transfer of downstream complementarities or substitutabilities to upstream negotiations. This overturns the conventional wisdom that upstream mergers reduce negotiated prices when the products are complements for consumers and increase prices when they are substitutes.

An important exception to the two-product focus in the literature is the work of \cite{shaffer2}. They analyse a model where \( n \geq 2 \) upstream suppliers sell their products through a single monopoly retailer, with all products being gross substitutes. On page 584, they claim that because products are gross substitutes, they are substitutes for the retailer, providing proof for that result on page 593. However, as I show in the Online Appendix, this conclusion does not hold in general; the proof in \cite{shaffer2} contains an error.

My paper also contributes to the literature on cross-market hospital mergers—mergers between hospitals operating in different geographic areas or serving different patient populations. These mergers have attracted attention because empirical studies in the U.S. find that they can raise the prices negotiated with health insurers \citep{lewis,dafny},\footnote{For instance, \cite{lewis} find that independent hospitals acquired by hospital systems operating in other markets increase insurers’ prices by 17\%. Similarly, \citet{dafny} document that cross-market hospital mergers are associated with 7–9\% higher prices for the merging entities.} even though hospitals offering non-overlapping services are likely complements for employers and households—the customers purchasing the insurance plans—and thus for insurers as well \citep{gautam}.\footnote{In the U.S. healthcare market, employers and households likely view hospitals that offer non-overlapping services as complements, as they seek to meet a range of healthcare needs for their employees or family members.}

Several theories have been proposed to explain this phenomenon. \cite{lewis} argue that price increases stem from greater post-merger bargaining power, captured by a higher bargaining weight in the Nash bargaining protocol. This increased bargaining strength may arise as larger hospital systems pool information about insurers or share the costs of building more sophisticated contract negotiation teams. \cite{peters} proposes a different mechanism: cross-market mergers can raise hospitals’ reservation payoffs when insurers compete by altering consumer behaviour after a negotiation impasse. In particular, more consumers switch to rival insurers post-merger than pre-merger, allowing the merged hospitals, which still contract with those rivals, to recapture a larger share of lost consumers. This reduces the cost of an impasse for hospitals and strengthens their bargaining position.

In contrast, I propose an alternative explanation: profit spillover effects on other hospitals within an insurer’s network can render hospitals substitutes for insurers, even if they are complements for employers and households. To demonstrate that this mechanism is different, I assume a single monopoly intermediary and hold bargaining weights constant across upstream market structures. This approach shuts down these alternative channels through which the merger could lead to price increases.

More closely connected to my work is the “Holes in the Network” (HiN) theory of harm, developed by \cite{vistnes} and \cite{dafny}. In essence, the HiN theory posits that two hospitals offering non-overlapping services—hospitals 1 and 2—can act as substitutes for an insurer in terms of profits if employers and households are willing to purchase an insurance plan that excludes either hospital, but not both (i.e., customers tolerate one but not two “holes” in the network). As I explain in Section \ref{sec:implications}, the HiN theory represents an extreme version of my argument, describing the scenario where hospitals 1 and 2 are “perfect substitutes” in generating profit spillovers to other hospitals in the insurer’s network. 

While insightful, the current presentation of the HiN theory has several limitations that this paper addresses. The central limitation is its inability to reconcile its findings with the broader literature: How can two hospitals be complements for employers and households—who value broad hospital networks—yet substitutes for insurers? By adopting a more general framework, my paper resolves this contradiction and shows when complementarities at the consumer level do, and do not, translate to the insurer (intermediary) level.

Moreover, the HiN theory focuses narrowly on downstream complementarities driven by one-stop-shopping economies—that is, the desire of employers and households to purchase plans covering a wide range of health needs. By contrast, my analysis shows that profit spillover effects can sever the link between downstream demand relationships and upstream bargaining outcomes across a wider range of settings, including cases involving downstream substitutabilities.

\section{A Simple Model} \label{sec:simple}

In this section, I present my main results using a simple model designed to clarify the underlying mechanism with maximum transparency. The model abstracts from retail and wholesale prices, as well as marginal costs, by assuming reduced-form payoffs. It focuses on a specific scenario: the merger of two upstream suppliers of consumer goods selling unrelated products through large retail chains that offer one-stop shopping to consumers. An example of such a merger is the 2005 transaction between Procter \& Gamble and Gillette (EU Case No. COMP/M.3732). In this setting, downstream complementarities arise solely from one-stop-shopping economies.

In Section \ref{sec:general}, I extend these results by incorporating firms’ pricing decisions and allowing for more general forms of complementarity and substitutability at the downstream level.

\subsection{The Setting} \label{sec:setting}

\textit{Setup.---} Consider a large monopoly retailer offering one-stop shopping convenience to a unit mass of infinitesimal consumers. The retailer can carry up to three products, \(i \in \{1, 2, 3\}\), which are neither direct substitutes nor technological complements (i.e., they are ``unrelated'' in consumption). Each product is exclusively produced by a single supplier, allowing products and suppliers to be referred to interchangeably. 

Suppliers can only reach final consumers indirectly via the retailer. Final consumers have a unit demand for each product. I assume---in a reduced-form fashion---that purchasing a unit of product $i$ provides the consumer with a surplus of \(v_i > 0\) and generates profits of \(\pi_i > 0\) for the retailer. For simplicity, suppliers do not directly profit from the sales of their products. Because products are unrelated, when the consumer buys multiple products, her surplus---and the retailer's profits---are additive over these products. 

Consumers must also incur a shopping cost $\xi \ge 0$ when visiting the retailer. This shopping cost varies among the population of consumers according to the cumulative distribution function (CDF) $G$. Final consumers observe the retailer’s product portfolio before deciding whether to incur this cost. The shopping cost can be understood literally as the opportunity cost of time spent navigating traffic, finding parking, and choosing products, or---taking a partial equilibrium view---as the value for the consumer of visiting its best alternative among the retailer's competitors.

I analyse the effects of a merger between suppliers 1 and 2 under the assumption that the retailer and suppliers negotiate lump-sum carriage fees using a Nash-in-Nash bargaining protocol. I denote by \( \beta \in (0, 1) \) the bargaining weight of the retailer in its negotiations with each supplier, where these weights are independent of the upstream market's ownership structure.

For independent suppliers, negotiations with the retailer proceed with the expectation that agreements will also be reached with other suppliers. For suppliers within a conglomerate, the process is similar but with a critical distinction: if the retailer fails to reach an agreement with any individual supplier within the conglomerate, it loses access to all products provided by the conglomerate. This is equivalent to the suppliers creating a new entity that centralizes the negotiations with the retailer.

\vspace{3mm}

\noindent \textit{Payoffs.---} Let $x_i \in \{0,1\}$ indicate whether the retailer carries product $i$ ($x_i=1$) or not ($x_i=0$). The retailer’s portfolio of products is thus represented by $\bx \equiv (x_1,x_2,x_3)$. Additionally, denote by $t_i$ the lump-sum fee paid by the retailer to supplier $i$ when an agreement is reached. For brevity, let $\bv \equiv (v_1,v_2,v_3)$, $\bpi \equiv (\pi_1,\pi_2,\pi_3)$, and $\bt \equiv (t_1,t_2,t_3)$. 

The indirect utility function of a consumer with a shopping cost of $\xi$, given the retailer’s portfolio $\bx$, is $U(\bx;\xi)=\max\{  \bx \cdot \bv - \xi, 0 \}$. The demand for product \(i\) is equal to the fraction of consumers who visit the retailer, conditional on the retailer carrying the product: \begin{equation*} \label{eq:aa1} D_i(\bx) = x_i \cdot \mathbb{P}\big(  \textstyle \bx \cdot \bv  \ge \xi \big) =x_i \cdot G( \bx \cdot \bv )\end{equation*}

Given that the retailer earns $ \bx \cdot \bpi $ per consumer that visits the store, its total profits before and after fixed fees are $\Pi(\bx)$ and $\Pi(\bx) - \bx \cdot \bt$, respectively, where: \begin{equation*} \label{eq:Pi} \Pi(\bx)= \big( \bx \cdot \bpi \big) \cdot G( \bx \cdot \bv )\end{equation*} Finally, the profits of supplier \(i\) are \(0\) before fixed fees and \(x_i t_i\) after fixed fees.

\subsection{A Brief Discussion} \label{sec:disc}

Before proceeding with the analysis, I introduce the key concepts of complementarity, supermodularity, and gross complementarity (along with their respective opposites) and show that the setting described exhibits downstream complementarities according to two notions: consumers’ utility is supermodular, and products are gross complements in demand. Additionally, I discuss several assumptions underlying the model and provide evidence that intermediaries who fail to reach an agreement with a conglomerate often lose access to all of the conglomerate’s products.

\vspace{3mm}

\noindent \textit{Complements and Substitutes: Some Key Definitions}.--- I begin by defining the concepts of complements and substitutes, supermodularity and submodularity, and gross complements and gross substitutes.

\begin{mydef1} Consider an arbitrary function $f(x_i,x_j,x_k)$, where $(x_i,x_j,x_k) \in \{0,1\}^3$. Products $i$ and $j$ are: \vspace{-2mm} \begin{itemize}[leftmargin=*,noitemsep]
\item Complements in terms of $f$ if the function $f(x_i,x_j,x_k)$ has increasing differences in $(x_i,x_j)$, i.e., if: \begin{equation*} \label{eq:inc_dif}  f(1,1,x_k)-f(0,1,x_k) \ge f(1,0,x_k)-f(0,0,x_k), \ \text{for any $x_k \in \{0,1\}$ } \end{equation*}
\item Substitutes in terms of $f$ if the function $f(x_i,x_j,x_k)$ has decreasing differences in $(x_i,x_j)$. \end{itemize} \end{mydef1}

Intuitively, products \(i\) and \(j\) are complements if the incremental gain in terms of \(f\) from carrying product \(i\) when product \(j\) is also present, \(f(1, 1, x_k) - f(0, 1, x_k)\), is greater than the gain when product \(j\) is not present, \(f(1, 0, x_k) - f(0, 0, x_k)\). The definition of substitutability is analogous: products \(i\) and \(j\) are substitutes if the incremental gain is smaller when product \(j\) is also present than when it is not.

I now introduce the stronger concepts of supermodularity and submodularity:\footnote{The following definitions are consistent with the standard lattice-theoretic definitions. Specifically, when a function’s domain is the product of completely ordered sets—such as \(\{0, 1\}\) or \(\mathbb{R}\)—the function is supermodular if and only if it exhibits increasing differences in any pair of variables \citep{vives2010}.}

\begin{mydef2} The function $f(x_i,x_j,x_k)$ is: \vspace{-2mm} \begin{itemize}[leftmargin=*,noitemsep]
\item Supermodular if it has increasing differences in all pairs of variables. 
\item Submodular if it has decreasing differences in all pairs of variables. \end{itemize} \end{mydef2}

As the definitions above show, the notions of complementarity and substitutability, as well as the concepts of supermodularity and submodularity, apply to any function $f$. Hence, they are applicable in a variety of contexts. For instance, products $1$ and $2$ are complements for a consumer in terms of utility when her indirect utility function $U(\bx;\xi)$ has increasing differences in $(x_1,x_2)$.\footnote{Increasing differences and supermodularity are cardinal properties. As such, they are not preserved under monotone transformations. However, both concepts can be weakened to purely ordinal conditions. The ordinal counterpart of increasing differences is single-crossing, and the ordinal counterpart of supermodularity is quasi-supermodularity \citep{vives1999}. In this paper, I work with the stronger concepts of increasing differences and supermodularity because they are more intuitive.} Similarly, the same products are substitutes for the retailer in terms of profits if $\Pi(\bx)$ has decreasing differences in $(x_1,x_2)$.

In contrast, the notions of \textit{gross complements} and \textit{gross substitutes} apply exclusively to products' demand functions. The standard definition is that product $i$ is a gross complement (substitute) of product $j \neq i$ if an increase in the price of $j$ decreases (increases) the demand for $i$. In the specific setting under consideration---with reduced-form payoffs---the following definition captures this idea:

\begin{mydef3} \color{white} $\quad$ \color{black} \vspace{-2mm} \begin{itemize}[leftmargin=*,noitemsep]
\item Product $i$ is a gross complement of product $j \neq i$ if the demand for $i$ increases when product \(j\) is also present, i.e., if $D_i(\boldsymbol{x})$ is increasing in $x_j \in \{0,1\}$.
\item Product $i$ is a gross substitute of product $j \neq i$ if the demand for $i$ decreases when product \(j\) is also present. \end{itemize} \end{mydef3}

\vspace{1mm}

\noindent \textit{Downstream Complementarities}.--- Complementarity in utility and gross complementarity in demand are two related but distinct notions of downstream complementarities. The first reflects that the marginal utility of a product increases when another product is present, while the second captures that the demand for one product rises when another product is available.\footnote{Complementarity in utility is sometimes called ``q-complementarity'' or ``Edgeworth-Pareto complementarity,'' while gross complementarity is sometimes referred to as ``p-complementarity'' \citep{hicks,seidman}.} 

In the setting described in Section \ref{sec:setting}, both notions of downstream complementarities are present: the indirect utility of a consumer with shopping cost \(\xi\), \(U(\bx; \xi)\), is supermodular in \(\bx\), and all products are gross complements (i.e., the demand for a product increases with the size of the retailer’s portfolio). Intuitively, these downstream complementarities arise from one-stop shopping economies.

\vspace{3mm}

\noindent \textit{Why Assume a Monopoly Retailer?}.--- The primary aim of this paper is to challenge the prevailing view that upstream mergers weaken suppliers’ bargaining leverage with downstream complements and strengthen it with downstream substitutes. Assuming a single monopoly retailer allows me to isolate and focus on the core mechanism at play, avoiding the additional complexities introduced by competition among retailers. This approach also rules out alternative channels through which the merger could increase prices, such as the mechanism proposed by \cite{peters} discussed in the Related Literature section. For the same reason, I assume that Nash bargaining weights remain unchanged following the merger.

As such, this paper should not be seen as providing a comprehensive framework for evaluating upstream mergers, as it omits certain potential mechanisms through which mergers can affect prices. However, this does not, in my view, diminish the significance of its findings. The effects I uncover remain applicable even when additional complexities are introduced. Consequently, the conventional wisdom risks imposing overly stringent cost-synergy thresholds for merger approval in the case of downstream substitutes, even when other factors are considered. Conversely, in the case of downstream complements, it may lead authorities to prematurely classify a merger as efficiency-driven—without sufficient evaluation—when the merger may instead be motivated by the prospect of securing higher negotiated prices.

\vspace{3mm}

\noindent \textit{On Nash-in-Nash Bargaining}.--- I model upstream negotiations using the Nash-in-Nash bargaining protocol \citep{horn,co}. This protocol has become increasingly prominent in the theoretical and empirical literature analyzing the effect of upstream negotiations on market outcomes \citep[e.g.,][]{dobson,ho}. It has also become an integral part of authorities' toolkit when evaluating upstream mergers of suppliers and vertical mergers \citep[e.g.,][]{farrell,shapiro,rogerson}.

One potential concern with this solution concept is that it may not be suitable when suppliers’ products are strong complements for the retailer \citep{co}. However, in the Online Appendix, I show that my main result also holds when upstream negotiations are modeled using the Shapley Value. This protocol does not have the same problems as the Nash-in-Nash solution when strong complementarities are present and can be microfounded in a noncooperative framework, as demonstrated by \cite{stole} and \cite{inderst}.

\vspace{3mm}

\noindent \textit{Disagreement Outcomes in Conglomerate-Intermediary Bargaining}.---The model assumes that if the retailer fails to reach an agreement with any individual supplier within a conglomerate, it loses access to all the conglomerate's products. This assumption aligns with how real-world negotiations between consumer goods manufacturers and retailers typically unfold.

For instance, in 2009, when Delhaize, a Belgian supermarket, failed to reach an agreement with Unilever, it was forced to stop stocking nearly 300 Unilever products \citep{reuters}. More recently, in January 2024, Carrefour delisted all PepsiCo products—such as Pepsi, Lipton Tea, and Lay’s crisps—in four European countries after negotiations between the two parties broke down over manufacturer-imposed price hikes \citep{reuters2}.

This ``all-or-nothing'' negotiation approach is also commonly observed in the U.S. healthcare market. When evaluating hospital mergers, the Federal Trade Commission (FTC) accounts for the fact that insurers often lose access to an entire hospital system if negotiations break down with any one hospital within it \citep[see ][p. 274]{farrell}. Consistent with this practice, major academic studies estimating the effects of policy interventions and counterfactuals in the industry also model bargaining on an all-or-nothing basis \citep[e.g.,][]{nevo2, lewis, ho}.

\vspace{3mm}

\subsection{Upstream Mergers with Downstream Complementarities Can Increase Negotiated Prices} \label{sec:merger}

I now analyse how the merger of suppliers 1 and 2 affects the terms negotiated in equilibrium. In the current setting, it is straightforward to show that the merger does not affect the negotiation between the retailer and supplier 3—the supplier not involved in the transaction. Hence, without loss of generality, I focus exclusively on how the merger affects the negotiations between the retailer and suppliers 1 and 2.

However, the inclusion or exclusion of product 3 in the retailer’s portfolio will play a crucial role in determining the merger’s impact, as will soon become apparent. I therefore compare two scenarios: one where product 3 is present (\(x_3 = 1\)) and one where it is absent (\(x_3 = 0\)), treating \(x_3 \in \{0, 1\}\) as a parameter.

Before the merger, the negotiated lump-sum fees paid by the retailer to suppliers $1$ and $2$ solve the following Nash bargaining programs: \begin{align*} & t_1^{\text{pre}} =  \argmaxA_{t_1} \left[ \Pi(1,1,x_3)- \Pi(0,1,x_3)- t_1 \right]^{\beta} \times t_1^{1-\beta} \\
& t_2^{\text{pre}} =  \argmaxA_{t_2} \left[ \Pi(1,1,x_3)- \Pi(1,0,x_3)- t_2 \right]^{\beta} \times t_2^{1-\beta} \end{align*} That is, $t_i^{\text{pre}}$ maximises the Nash product between the retailer's and supplier $i$'s gains from trading with each other, assuming the retailer also reaches an agreement with the other supplier. The combined premerger payments that suppliers $1$ and $2$ can jointly secure are, therefore, equal to: \begin{equation} \textstyle \label{eq:2} T^{\text{pre}}= \sum_{i =1,2} t_i^{\text{pre}} = (1-\beta)\left[ 2\Pi(1,1,x_3)-\Pi(0,1,x_3) -\Pi(1,0,x_3)\right] \end{equation}

Suppose now that suppliers $1$ and $2$ merge to form a conglomerate. The newly negotiated terms for each supplier within the conglomerate must solve the following reformulated Nash bargaining problem:\begin{equation*} \label{eq:post} \textstyle (t_1^{\text{post}},t_2^{\text{post}}) =  \argmaxA_{(t_{1},t_{2})} \left[  \Pi(1,1,x_3)- \Pi(0,0,x_3) - t_1 -t_2 \right]^{\beta} \times  \left[ t_1+t_2 \right]^{\beta}  \end{equation*}This new Nash program reflects that the merger changes the disagreement outcome: if the retailer and the conglomerate fail to reach an agreement, the retailer loses access to both products and avoids paying both fees. The resulting post-merger terms are: \begin{equation} \label{eq:2post}  \textstyle T^{\text{post}}= \sum_{i=1,2} t_i^{\text{post}} = (1-\beta)\left[ \Pi(1,1,x_3)-\Pi(0,0,x_3) \right]  \end{equation}

Comparing (\ref{eq:2}) and (\ref{eq:2post}), the total payment to the conglomerate will be greater than the sum of premerger payments (i.e., $T^{\text{pre}}<T^{\text{post}}$) if and only if the merging products are strict substitutes for the retailer in terms of profits, i.e., if $\Pi(1,1,x_3)-\Pi(0,1,x_3) < \Pi(1,0,x_3)-\Pi(0,0,x_3)$. Intuitively, when products are substitutes, an upstream merger increases the suppliers’ bargaining leverage. This is because failing to reach an agreement with both suppliers simultaneously results in a more significant reduction in the retailer’s profits than failing to reach an agreement with both suppliers individually: \begin{multline*} \Pi(1,1,x_3)-\Pi(0,1,x_3) < \Pi(1,0,x_3)-\Pi(0,0,x_3) \\ \Longleftrightarrow  \underbrace{[ \Pi(1,1,x_3)-\Pi(0,1,x_3) ]}_{\substack{\text{profits loss from not reaching} \\ \text{an agreement with supplier $1$}} } + \underbrace{[ \Pi(1,1,x_3)-\Pi(1,0,x_3) ]}_{ \substack{\text{profits loss from not reaching} \\ \text{an agreement with supplier $2$}} }  < \underbrace{\Pi(1,1,x_3)-\Pi(0,0,x_3)}_{\substack{\text{profits loss from not reaching an} \\ \text{agreement with both simultaneously}} }  \end{multline*}

By the same token, when products are complements in profits for the retailer, the merger weakens the suppliers’ bargaining leverage. This happens because failing to reach an agreement with both suppliers simultaneously results in a smaller reduction in the retailer’s profits compared to failing to reach an agreement with both suppliers when negotiating individually.

As noted by \citet{shaffer1}, \citet{inderst}, and \citet{shaffer2}, products that are gross substitutes in demand should also be substitutes in profits from the retailer’s perspective. The reason is that when two such products are added to the retailer’s portfolio, some consumers switch between them rather than generating entirely new sales. As a result, the incremental profit from offering both products is less than the sum of the profits from offering each one in isolation.

By the same logic, consumer-level complementarities should also carry over to the retailer: when products are gross complements in demand, offering both increases the likelihood of joint purchases, making the combined profit greater than the sum of the stand-alone profits \citep{inderst,baye}. 

The problem with the conventional wisdom, however, is that it overlooks how the inclusion of a product affects the spillovers created for other products in the retailer’s portfolio. As a result, downstream complementarities and substitutabilities do not always transfer upstream.

To see this formally, note that, in the current setting, products 1 and 2 are complements for the retailer in terms of profits, i.e., $\Pi(1,1,x_3)-\Pi(0,1,x_3) > \Pi(1,0,x_3)-\Pi(0,0,x_3)$, if and only if:\begin{multline} \label{eq:main} \pi_1\left[G(v_1+v_2+x_3 v_3) - G(v_1+x_3 v_3)\right]+ \pi_2\left[G(v_1+v_2+x_3 v_3) - G(v_2+x_3 v_3) \right] \\ > x_3 \pi_3 \left[G(v_1+x_3v_3)+G(v_2+x_3v_3) -G(v_1+v_2+x_3 v_3) - G(x_3 v_3) \right]  \end{multline}where I use the fact that $\Pi(x_1,x_2,x_3)= \big( \bx \cdot \bpi \big) \cdot G( \bx \cdot \bv )$. 

The left-hand side of (\ref{eq:main})—which is always positive—captures the effects described by the conventional wisdom: Because products 1 and 2 are complements to consumers, carrying product 1 increases the retailer’s profits from product 2 by increasing the number of consumers visiting the retailer (and vice versa). When the retailer’s portfolio consists exclusively of the merging suppliers’ products (i.e., \(x_3 = 0\)), this is the only force determining the complementarity or substitutability of products 1 and 2 for the retailer in terms of profits, as the right-hand side of (\ref{eq:main}) is zero. Thus, the conventional wisdom holds in this case: downstream complementarities transfer upstream, so an upstream merger always decreases negotiated prices. 

However, when the retailer carries product 3 (i.e., \( x_3 = 1 \)), an additional factor comes into play, as the right-hand side of equation (\ref{eq:main}) is no longer zero. This term—strictly positive when \( G(x) \) is strictly concave and strictly negative when \( G(x) \) is strictly convex—captures whether products 1 and 2 act as complements or substitutes in the profit ``spillovers'' they generate for product 3, the other product in the retailer’s lineup.

Formally, the profit spillovers of products 1 and 2 on product 3 are given by: \begin{equation} \label{eq:Sx1x2}
S_3(x_1, x_2) \equiv  \pi_3D_3(x_1,x_2,1)- \pi_3 D_3(0,0,1) = \pi_3[ G(x_1v_1+x_2v_2+v_3)-G(v_3)]
\end{equation}which captures how the presence or absence of products 1 and/or 2 alters the profit the retailer derives from product 3. In this setting—where downstream complementarities arise solely from one-stop-shopping economies—these spillovers are always positive, as expanding the retailer’s portfolio invariably increases consumer traffic.

However, despite these spillovers being positive, products 1 and 2 can still act as substitutes in generating them. This occurs when: \begin{multline} \label{eq:S3ccc} S_3(1, 1)-S_3(0,1) < S_3(1,0)-S_3(0,0) \\
\Longleftrightarrow 0< \underbrace{\pi_3 \left[G(v_1+v_3)+G(v_2+v_3) -G(v_1+v_2+v_3) - G(v_3) \right]}_{\text{the exact same expression as the right-hand side of (\ref{eq:main}) when $x_3=1$}}  \end{multline}That is, when the increase in the retailer’s profits from product 3, achieved by adding products 1 and 2 simultaneously, is smaller than the combined profit increase from adding each product individually. In this context, this occurs when \( G(x) \) is strictly concave, as it implies that adding product 1 on top of product 2 (or vice versa) does not increase consumer traffic significantly more than adding either product alone.

Thus, according to (\ref{eq:main}), when products 1 and 2 are substitutes in the spillovers they generate for product 3, and this substitutability is sufficiently strong, products 1 and 2 can become substitutes for the retailer in terms of profits, even though \textit{all three products are complements for consumers}. When this occurs, the merger of suppliers 1 and 2 results in higher---not lower---negotiated prices. 

I summarize the above discussion in the following proposition:

\begin{samepage}
\begin{prop} \label{prop:main} $\qquad$ \vspace{-2mm} \nopagebreak \begin{itemize}[leftmargin=*]
\item When the retailer carries only products 1 and 2, downstream complementarities necessarily transfer upstream. As a result, a merger between suppliers 1 and 2 always reduces negotiated prices.
\item When the retailer’s portfolio includes product 3, downstream complementarities do not necessarily transfer upstream. Consequently, a merger between suppliers 1 and 2 can lead to higher negotiated prices, even when all products are complements for consumers.
\end{itemize} \end{prop}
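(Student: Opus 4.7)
The plan is to derive both bullets directly from the comparison of $T^{\text{pre}}$ in (\ref{eq:2}) and $T^{\text{post}}$ in (\ref{eq:2post}), which the text already establishes. Subtracting yields
\[
T^{\text{post}} - T^{\text{pre}} = (1-\beta)\bigl[\Pi(0,1,x_3) + \Pi(1,0,x_3) - \Pi(1,1,x_3) - \Pi(0,0,x_3)\bigr],
\]
so $T^{\text{post}} > T^{\text{pre}}$ if and only if products 1 and 2 are strict substitutes for the retailer in terms of profits, which, after substituting $\Pi(\bx) = (\bx\cdot\bpi)\,G(\bx\cdot\bv)$, is exactly the reversal of (\ref{eq:main}). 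Both bullets therefore reduce to checking the sign of (\ref{eq:main}).

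For the first bullet I would set $x_3 = 0$. The right-hand side of (\ref{eq:main}) then vanishes, while the left-hand side reduces to $\pi_1[G(v_1+v_2) - G(v_1)] + \pi_2[G(v_1+v_2) - G(v_2)]$, which is strictly positive under the standing assumption that $G$ is strictly increasing on the relevant range (needed for pre-merger demand to be non-degenerate) and $\pi_i, v_i > 0$. Hence products 1 and 2 are strict complements for the retailer in profits, so $T^{\text{post}} < T^{\text{pre}}$ and the merger strictly reduces negotiated prices.

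The second bullet is constructive: it suffices to exhibit a CDF $G$ and parameters $(\bv,\bpi)$ such that all three products are gross complements downstream yet (\ref{eq:main}) is reversed at $x_3 = 1$. Gross complementarity here follows automatically from the monotonicity of $G$, so the only task is to make the right-hand side of (\ref{eq:main}) strictly exceed the left-hand side. The key observation is that the bracketed term on the right, $G(v_1+v_3) + G(v_2+v_3) - G(v_1+v_2+v_3) - G(v_3)$, is the negative of the discrete mixed difference of $G$ in $(v_1,v_2)$, and is strictly positive whenever $G$ is strictly concave on $[v_3,\,v_1+v_2+v_3]$. I would therefore pick such a strictly concave $G$, keep $\pi_1, \pi_2$ small, and scale $\pi_3$ up until the right-hand side dominates.

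The main obstacle is merely producing a clean, explicit example; the conceptual content is already captured by (\ref{eq:main}). A convenient choice is a piecewise-linear, strictly concave $G$ with a kink inside $[v_3,\,v_1+v_2+v_3]$ (or any smooth CDF with strictly decreasing density on that range), together with $v_1 = v_2$, $\pi_1 = \pi_2$ small, and $\pi_3$ large. With these numbers one can directly evaluate both sides of (\ref{eq:main}) at $x_3 = 1$ and verify that the right-hand side exceeds the left-hand side, which—via the reduction above—yields $T^{\text{post}} > T^{\text{pre}}$ and completes the proof.
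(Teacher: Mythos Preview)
Your proposal is correct and takes essentially the same approach as the paper: the paper's proof is simply ``Immediate from the discussion above,'' and that discussion is precisely the reduction to inequality (\ref{eq:main}), the observation that its right-hand side vanishes at $x_3=0$ while the left-hand side is positive, and the remark that a strictly concave $G$ makes the right-hand side positive so that it can dominate when $\pi_3$ is large enough. Your construction for the second bullet (concave $G$, small $\pi_1,\pi_2$, large $\pi_3$) is exactly the mechanism the paper points to, just made a bit more explicit.
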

\end{samepage}

\begin{proof} Immediate from the discussion above. \end{proof}

\subsection{Implications} \label{sec:implications}

\noindent \textit{Where Does the Conventional Wisdom Come From?}--- As evident from condition (\ref{eq:main}), downstream complementarities necessarily transfer upstream when the retailer carries only the merging suppliers’ products. This fact can help explain the origins of the conventional wisdom described in the previous section. Indeed, as discussed in the Related Literature section, most theoretical models showing that downstream complementarities (or substitutabilities) transfer upstream typically either focus on two products \citep[e.g.,][]{shaffer1,inderst,baye}.\footnote{The result that downstream complementarities and substitutabilities transfer upstream also appears in \cite{horn}, who also consider only two products. However, their setting differs from mine as it includes strategic interaction in the downstream market, with two downstream firms facing interdependent downstream demands.}

\vspace{3mm}

\noindent \textit{The ``Holes in the Network'' Theory of Harm}.--- Proposition \ref{prop:main} is closely related to the “Holes in the Network” (HiN) theory of harm, developed to explain why mergers of hospitals operating in different geographic or diagnostic markets can sometimes raise negotiated prices \citep{vistnes,dafny}.

In essence, the HiN theory states that if employers or households (the “consumers”) value having multiple hospitals included in an insurer’s network, two hospitals offering non-overlapping services—hospitals 1 and 2—can act as substitutes for the insurer in terms of profits. This occurs when consumers are willing to purchase the insurance plan if either hospital 1 or 2 is excluded, but not if both are excluded (i.e., customers tolerate one but not two “holes” in the plan).

The HiN theory represents an extreme version of my argument. Indeed, the theory describes a situation where \(G(v_1 + v_2 + v_3) = G(v_1 + v_3) = G(v_2 + v_3) = 1\) and \(G(v_3) = 0\). Under these conditions, products 1 and 2 are always substitutes for the retailer in terms of profits:  \begin{multline*}  \underbrace{\pi_1\left[G(v_1+v_2+v_3) - G(v_1+v_3)\right]}_{=0}+ \underbrace{\pi_2\left[G(v_1+v_2+v_3) - G(v_2+v_3) \right]}_{=0}\\ < \underbrace{\pi_3 \left[G(v_1+v_3)+G(v_2+v_3) -G(v_1+v_2+v_3) - G(v_3) \right]}_{=\pi_3}  \end{multline*}

Intuitively, if consumers continue visiting the retailer when either product 1 or 2 is missing but stop visiting entirely when both products are absent, the two products function as perfect substitutes in generating spillovers—particularly store traffic—for product 3. Furthermore, because adding product $1$ or product $2$ to the portfolio does not affect store traffic when the other is already present, the presence of product 1 does not increase the profits of product 2, and vice versa. Consequently, in the HiN theory, the substitutability of products 1 and 2 in terms of profits for the retailer is driven solely by their substitutability in generating spillovers for product 3.

Although the HiN theory is insightful, its current presentation has several shortcomings, which this paper addresses. The main one is its inability to resolve the apparent contradiction with the conventional wisdom outlined earlier: How can two hospitals be complements for consumers—who value broad networks—yet substitutes for insurers? This paper provides a unifying framework that resolves this contradiction and clarifies the conditions under which complementarities at the consumer level extend to the intermediary level.

Moreover, the HiN theory focuses narrowly on downstream complementarities driven by one-stop-shopping economies—that is, the desire of employers and households to purchase plans covering a wide range of health needs. As I demonstrate in Section \ref{sec:general}, however, the insights extend to other types of complementarities as well. Notably, they even apply when products are gross substitutes, as I argue next.

 \vspace{3mm}

\noindent \textit{Upstream Mergers with Downstream Substitutabilities can Reduce Negotiated Prices}.--- Suppose product demands take the more general form \( D_i(\bx) \) for \( i = 1, 2, 3 \), rather than being driven by one-stop shoppers with unit demands, as in Sections \ref{sec:setting}--\ref{sec:merger}. Assume also that all products are gross substitutes, meaning that \( D_i(\bx) \) is decreasing in \( x_j \in \{0, 1\} \) for all \( i \neq j \). If the retailer generates a profit of \( \pi_i > 0 \) per unit of product \( i \) sold, its total profits before fixed fees are $\Pi(\bx)= \sum_{i} x_i \pi_i  D_i(\bx)$. 

In this context, the conventional wisdom asserts that products 1 and 2 must be substitutes for the retailer in terms of profits, given that they are gross substitutes in the downstream market. Consequently, a merger between suppliers 1 and 2 should always increase negotiated prices. 

However, the presence of product 3 in the retailer’s portfolio again disrupts this logic. Indeed, in this case, products 1 and 2 are substitutes in terms of profits, $\Pi(1,1,1)-\Pi(0,1,1) < \Pi(1,0,1)-\Pi(0,0,1)$, if and only if:\begin{multline} \label{eq:mainS} \underbrace{\pi_1\left[D_1(1,1,1) - D_1(1,0,1)\right]}_{<0}+ \underbrace{\pi_2\left[D_2(1,1,1) - D_2(0,1,1) \right]}_{<0} \\ < \underbrace{\pi_3 \left[D_3(1,0,1)+D_3(0,1,1)-D_3(1,1,1)-D_3(0,0,x_3) \right]}_{ \gtrless 0}  \end{multline}Hence, products 1 and 2 can act as complements for the intermediary in terms of profits, even though all three products are gross substitutes in the downstream market. In such a scenario, the merger leads to a reduction in negotiated prices.

For this to occur, products 1 and 2 must be complements in generating (negative) profit spillovers for product 3. Specifically, the reduction in the retailer’s profits from product 3, resulting from adding both products 1 and 2 to the lineup, must be smaller than the combined reduction in profits when adding each product individually. Mathematically, this condition is: \begin{multline*} S_3(1, 1)-S_3(0,1) > S_3(1,0)-S_3(0,0) \\ \Longleftrightarrow 0> \underbrace{\pi_3[D_3(1,1,1) -D_3(0,0,1) ]}_{ \substack{\text{reduction in 3's profits from} \\ \text{adding 1 and 2 simultaneously}} } - \underbrace{\pi_3[D_3(1,1,1) -D_3(1,0,1) ]}_{ \substack{\text{reduction in 3's profits from} \\ \text{adding only product 1 }} } - \underbrace{\pi_3[D_3(1,1,1) -D_3(0,1,1) ]}_{ \substack{\text{reduction in 3's profits from} \\ \text{adding only product 2 }} }  \end{multline*}

\vspace{3mm}

\noindent \textit{The Effects of an Upstream Merger can be Buyer-Specific}.--- My results help explain why upstream mergers can produce buyer-specific price effects. For instance, \cite{thompson} analyses the 1998 horizontal merger between New Hanover Regional Medical Center and Columbia Cape Fear Memorial Hospital in Wilmington, North Carolina. The study documents significant variation in post-merger price changes across insurers (also referred to as ``managed care organisations'' or MCOs): two insurers faced substantial price increases, one insurer’s prices remained stable, and another experienced notable price decreases.

Similarly, \citet{haas2009two} examine two horizontal hospital mergers in Chicago, Illinois, in 2000: the acquisition of Highland Park Hospital (HPH) by Evanston Northwestern Healthcare (ENH), and the merger between Provena St. Therese Medical Center (STMC) and Victory Memorial Hospital (VMH). After the ENH/HPH merger, four of the five active insurers in the market experienced statistically significant price increases, while the fifth avoided price increases altogether. Following the STMC/VMH merger, three of the same five insurers saw statistically significant price decreases, one insurer experienced no significant change, and one faced significant price increases.

Notably, the insurer that suffered the largest price increases after the ENH/HPH merger was also the only one to experience price increases following the STMC/VMH merger. This leads \citet[][p. 20]{haas2009two} to conclude: \begin{quote} One interesting aspect of the price change estimates from the MCO data is their variance across MCOs, suggesting that some MCOs may be more vulnerable to hospital mergers than others. \end{quote}

This observed variation aligns with the portfolio mechanism proposed here: differences in insurers’ networks likely determine whether merging hospitals act as complements or substitutes for each insurer. However, neither study provides detailed insurer-level data---to protect insurers' anonymity---and therefore cannot directly confirm this mechanism.\footnote{If insurer characteristics were available, we could test for portfolio effects in two ways. First, insurers with smaller networks should be more likely to experience price increases, since the merging hospitals represent a larger share of their network, limiting the potential for portfolio spillover effects. Second, as I discuss below, we could examine consumer loss ratios to test whether insurers experiencing lower post-merger prices were those for whom the merging hospitals functioned as complements in generating spillover effects—a necessary condition for the hospitals to be substitutes from the perspective of consumers but complements from the perspective of insurers.} Nevertheless, alternative explanations struggle to account for these patterns, especially in the New Hanover/Cape Fear and STMC/VMH mergers.

Indeed, both mergers were horizontal, involving hospitals that competed—at least to some extent—in the same geographic area and diagnostic market \citep{thompson,haas2009two}. Thus, the conventional view would predict price increases across all insurers, not insurer-specific price decreases. A similar prediction follows from \citeapos{lewis} argument that mergers enhance hospitals’ bargaining skills, and from \citeapos{peters} “recapture effect,” as mergers always increase recapture rates (see the Related Literature Section for an overview of these alternative mechanisms).

Hence, in the absence of portfolio effects, explaining why some insurers experienced price increases while others saw price decreases would require combining one or more of these theories with insurer-specific synergies. However, it is unclear why a hospital system would realize synergies with certain insurers but not others. A more straightforward explanation is that these buyer-specific effects are a natural consequence of the portfolio mechanism developed in this paper.

\vspace{3mm}

\noindent \textit{Implications for Merger Enforcement}.---My findings carry significant implications for the evaluation of upstream mergers when input prices are negotiated.

First, they challenge the conventional wisdom asserting that downstream substitutabilities invariably exacerbate the competitive concerns of upstream mergers, while downstream complementarities mitigate them. For downstream substitutabilities, this misconception could lead to excessively stringent cost-synergy thresholds for merger approval, potentially blocking transactions that would enhance overall welfare. For downstream complementarities, it risks prematurely classifying a merger as efficiency-driven—without further evaluation—when the merger may instead be motivated by the prospect of securing higher negotiated prices.

Second, my findings call for a reassessment of how upstream conglomerate mergers are evaluated. As noted in the Introduction, the prevailing approach primarily emphasizes risks such as anticompetitive foreclosure or monopolisation. For example, in its assessment of the 2005 merger between Procter \& Gamble and Gillette, the European Commission approved the merger after concluding that it would not allow the parties to impose weak brands on retailers, foreclose competitors from accessing retailers’ limited shelf space, or hinder market entry of new products through bundling practices.\footnote{The Commission also forced Procter \& Gamble to divest its battery toothbrush business ``SpinBrush,'' since there were significant horizontal overlaps at the consumer level between this brand and Gillette's ``OralB'' brand. } However, my results demonstrate that negotiated prices can increase through mechanisms unrelated to foreclosure, highlighting the need to revisit the evaluation criteria for these mergers.

Third, the fact that upstream mergers can raise negotiated prices for some intermediaries while lowering them for others implies that such mergers may have complex distributional consequences. Some intermediaries and their consumers may benefit, while others may be harmed. Authorities should therefore recognize that merger effects need not be uniform across buyers—and that accurately weighing these gains and losses typically requires a structural model.

Because estimating such models can be costly, it would be useful to first assess whether profit spillovers are likely to cause the merging products to act as complements for some intermediaries and substitutes for others. As I explain next, consumer loss ratios—sometimes estimable without a full structural model, using data from blackouts, delistings, or consumer surveys—can offer a practical first approximation for this purpose.

\vspace{3mm}

\noindent \textit{Consumer Loss Ratios to Measure Spillover Complementarities/Substitutabilities}.---My results suggests that analyzing consumer loss ratios---which measure the proportion of customers lost by the retailer after removing a product or set of products---can provide a rough approximation as to whether the merging products are complements or substitutes in spillovers for a given intermediary. Notably, consumer loss ratios have been applied in other antitrust contexts, including the evaluation of vertical mergers (e.g., \textit{U.S. v. AT\&T, et al.,} Civil Case No. 17-2511, RJL, D.D.C. Jun. 12, 2018).
 
Indeed, in the one-stop-shopping model of Sections \ref{sec:setting}-\ref{sec:merger}, note that $S_3(1, 1)-S_3(0,1) >S_3(1,0)-S_3(0,0)$ if and only if: \begin{multline*} \pi_3 \big[  \underbrace{\big(G(v_1+v_2+v_3)-G(v_2+v_3)\big)}_{ \substack{ \text{consumer loss ratio} \\ \text{of product 1  ($\mathrm{CL}_1$)} } } + \underbrace{\big( G(v_1+v_2+v_3)-G(v_1+v_3)\big)}_{ \substack{ \text{consumer loss ratio} \\ \text{of product 2  ($\mathrm{CL}_2$)} } } - \underbrace{\big(G(v_1+v_2+v_3)- G(v_3) \big)}_{ \substack{ \text{consumer loss ratio of products} \\ \text{1 and 2 combined ($\mathrm{CL}_{12}$)} } }\big] \\ > 0\end{multline*}Hence, if the sum of the consumer loss ratio of products 1 and 2 is greater than the consumer loss ratio of simultaneously losing both products---i.e., if $\mathrm{CL}_1+\mathrm{CL}_2>\mathrm{CL}_{12}$---the two products are complements in generating spillovers to product $3$.

Although a precise estimation of consumer loss ratios would likely require a structural model, rough approximations can be derived from episodes such as blackouts, delistings, or order freezes, which are becoming increasingly common.\footnote{In addition to the Delhaize and Carrefour examples discussed in Section \ref{sec:disc}, other notable delisting episodes or order freezes in the retail industry include the 2009 standoff between Costco and Coca-Cola \citep{retailwire}, and the Edeka Group's conflict with 17 international brands, including Procter \& Gamble, Mars, and Pepsi \citep{marketscreener}. Blackouts have also become increasingly common in the cable TV industry, as documented by \citet{frieden}.}  This was the approach apparently used in the \textit{U.S. v. AT\&T, et al.} case \citep[see][]{shapiro,carlton}.

An alternative is to use consumer surveys similar to those sometimes employed by antitrust authorities to calculate diversion ratios in horizontal merger analyses \citep{reynolds}. For instance, in the Somerfield/Morrisons case—a merger between two supermarkets—the UK Competition Commission (now the UK Competition and Markets Authority) asked shoppers ``what they would have done had the store been closed'' \citep[][p. 419]{reynolds}. These types of questions could be easily adapted to estimate loss ratios by asking shoppers, “What would you do if the store stopped carrying product $i$?''

Finally, while consumer loss ratios can provide useful evidence about whether products are complements or substitutes in generating spillovers, it is important to recognize their limitations. In more general environments—including those with additional products or endogenous retail prices—two qualifications arise. 

First, estimating consumer loss ratios remains feasible only when spillovers have a common source---for instance, when they depend on how the retailer’s portfolio influences overall consumer traffic.\footnote{Otherwise, determining whether products 1 and 2 are complements or substitutes in spillovers becomes significantly more complex. It requires estimating how the removal of each product affects the demand for all remaining products and weighting those effects by the retailer’s profit margins (see the Online Appendix for details).} Second, even when estimation is feasible, interpreting $\mathrm{CL}_1 + \mathrm{CL}_2 - \mathrm{CL}_{12}$ as a test for whether products are complements or substitutes in generating spillovers is only approximate. The issue is that intermediaries’ retail prices may change when one or both products are removed. As a result, the approximation is most reliable when the retailer does not adjust prices in response to the merger, or when any such adjustments are small. In the Online Appendix, I formalize these qualifications and identify conditions under which retail price responses are likely to be limited.

\section{The General Result} \label{sec:general}

I now generalize my results by incorporating firms’ pricing decisions and examining broader forms of downstream complementarities and substitutabilities. For simplicity, I assume efficient upstream negotiations, meaning that parties bargain over two-part tariffs. This assumption implies that upstream mergers affect only firms’ profits and not consumer surplus.

In the Online Appendix, I consider an extension where upstream negotiations are over linear contracts. In such a case, consumer welfare is also affected.

\subsection{The Model} \label{sec:modelgen}

\textit{Setup.---} There are \( n \geq 2 \) independent upstream suppliers, each offering a single differentiated product through a monopoly retailer. Supplier \( i \) faces a constant per-unit cost \( c_i \geq 0 \) for product $i$. The monopoly retailer purchases these goods from upstream suppliers and resells them to final consumers.

Downstream demands are represented by $\boldsymbol{D}(\bp) \equiv [D_1(\bp),...,D_n(\bp)]$, where \( D_i(\bp) \) denotes the demand for product \( i \), and \( \bp \equiv (p_1, \dots, p_n) \) is the vector of retail prices. Demands are continuously differentiable, downward-sloping with respect to their own price, and have the property that there exists a finite $\bar{p}_i>0$ such that $D_i(p_i=\bar{p}_i, \bp_{-i})=0$ for any $\bp_{-i}=(p_j)_{j \neq i}$. Additionally, the demand system is invertible, with inverse demands denoted by $\boldsymbol{P}(\bq)  \equiv [P_1(\bq),...,P_n(\bq)]$, where $\bq \equiv (q_1,...,q_n)$ is the vector of quantities sold in the downstream market. 

I assume that all products are either (i) gross complements or (ii) gross substitutes. This implies that \( \partial D_i / \partial p_j \leq 0 \) for all \( i \neq j \) in the case of complements, or  \( \partial D_i / \partial p_j \geq 0 \) for all \( i \neq j \) in the case of substitutes.\footnote{As noted by \citet[][p. 144]{vives1999}, if $\partial D_i/\partial p_j  \le 0$ for $ i \neq j$, then $\partial P_i/\partial q_j  \ge 0$ for $ i \neq j$ (although the converse is not true). Similarly, if $\partial D_i/\partial p_j  \ge 0$ for $ i \neq j$, then $\partial P_i/\partial q_j  \le 0$ for $ i \neq j$ (again, the converse is not true).} I focus on these two extreme cases to ensure that the results are not driven by a mix of substitution and complementarity patterns in the downstream market. Put differently, this section demonstrates that even if all products are gross complements, some of them may still act as substitutes for the retailer. Conversely, even if all products are gross substitutes, some of them may still function as complements for the retailer.

As in the simple model of Section \ref{sec:simple}, I analyse the effects of a merger between suppliers 1 and 2. I assume that the retailer and suppliers negotiate two-part tariff contracts \( (w_i, t_i) \in \mathbb{R}^2 \), where \( w_i \) is the per-unit wholesale price for product \( i \), and \( t_i \) is a lump-sum fee paid by the retailer to supplier \( i \). As before, I use Nash-in-Nash as the bargaining protocol, with \( \beta \in (0, 1) \) denoting the retailer’s bargaining weight in its negotiations with each supplier. Additionally, I continue to assume that if suppliers 1 and 2 form a conglomerate and either of them fails to reach an agreement with the retailer, the retailer loses access to all the conglomerate's products.

Under these assumptions, a well-known result in the literature is that negotiated wholesale prices equal marginal costs \citep{shaffer2}. This result reflects the fact that equilibrium wholesale prices are set to maximise total industry profits, with the surplus divided through fixed fees \( \bt = (t_1, \dots, t_n) \). Thus, without loss of generality, I set $w_i=c_i$ for all $i=1, \dots,n$ and focus on how the merger affects the negotiated fees. 

The timing of the game is as follows: At date 1, the retailer and the suppliers negotiate over fixed fees. At date 2, after observing its portfolio of products, the retailer sets the retail prices for the products it carries.

\vspace{3mm}

\noindent \textit{Payoffs.---} As in Section \ref{sec:simple}, let $x_i \in \{0,1\}$ indicate whether the retailer carries product $i$ ($x_i=1$) or not ($x_i=0$). The retailer’s profits before and after fixed fees are $\Pi^*(\bx)$ and $\Pi^*(\bx) - \bx \cdot \bt$, respectively, where: \begin{align*} \Pi^*(\bx) & =\textstyle \max_{\bq} \Big\{  \sum_i (P_i(\bq)-c_i)q_i  \ \, \text{s.t. $q_i=0$ if $x_i=0$ for $i=1,\dots,n$} \Big\} \end{align*}I assume that product demands are well-behaved, ensuring that the retailer’s problem has a unique and interior maximum for any portfolio $\bx=(x_1,\dots,x_n)$. The profits of supplier $i$ are \(0\) before fixed fees and \(x_i t_i\) after fixed fees. 

\vspace{3mm}

\noindent \textit{The Effects of the Merger on Negotiated Fees.---} Under the stated assumptions, it can be shown that the merger between suppliers 1 and 2 does not affect the terms negotiated between the retailer and suppliers \( i = 3, \dots, n \). Furthermore, following reasoning analogous to that in Section \ref{sec:merger}, the merger enables the merging parties to extract better terms, i.e., $t_1^{\text{pre}} +t_2^{\text{pre}} < t_1^{\text{post}} +t_2^{\text{post}}$, if and only if the products are strict substitutes for the retailer in terms of profits when the retailer carries all the other products, i.e., when: \[ \Pi^*(1,1, \boldsymbol{1}_{-12})-\Pi^*(0,1,\boldsymbol{1}_{-12}) < \Pi^*(1,0,\boldsymbol{1}_{-12})-\Pi^*(0,0,\boldsymbol{1}_{-12}) \]where $\boldsymbol{1}_{-12}$ is a $(n-2)\times 1$ vector containing only ones. Conversely, the merger forces the merging parties to accept worse terms if and only if the products are strict complements for the retailer in terms of profits when the retailer carries all the other products.

The main results of this section, presented next, are twofold: (i) demonstrating that demand complementarities or substitutabilities do not always transfer upstream to the retailer’s profit function, and (ii) providing sufficient conditions under which these demand relationships do transfer upstream. However, before presenting the main results, two brief remarks about this more general model are in order.

\begin{myrem} \normalfont This setting encompasses the downstream complementarities described in the model of Section \ref{sec:simple}. Indeed, suppose that (i) consumers have quasilinear preferences and sufficiently high income, (ii) products are unrelated in consumption, and (iii) consumers face a shopping cost \( \xi \) distributed according to the CDF $G$. Then \( D_i(\bp) = q_i(p_i) G\left(\sum_{i} v_i(p_i)\right) \), where \( q_i(p_i) \) represents the demand for product \( i \) conditional on the consumer visiting the retailer, and \( v_i(p_i) = \int_{p_i}^\infty q_i(x) dx \) is the consumer surplus obtained from purchasing \( q_i \) units of \( i \) at a price \( p_i \). Furthermore, assuming \( G \) has a probability density function \( g \), it follows that \( \partial D_i / \partial p_j = -q_j(p_j) q_i(p_i) g\left(\sum_{i} v_i(p_i)\right) < 0 \) for \( i \neq j \).

However, the current environment also accommodates more general preferences and broader forms of complementarities and substitutabilities at the downstream level. For instance, demands \( \boldsymbol{D}(\bp) \) could result from the maximization of a representative consumer with general preferences \( u(\bq) \) and income \( I \). In this case, demand complementarities may arise from complementarities in consumption or income effects. The same is true regarding demand substitutabilities. \end{myrem}

\begin{myrem2} \normalfont Unlike the simplified model in Section \ref{sec:simple}, this section imposes no assumptions about whether the products are complements or substitutes in consumers’ utility, nor about the utility function being supermodular or submodular. Instead, I work directly with the products’ demands, requiring only gross complementarity or gross substitutability. 

I follow this approach for two reasons. First, as explained in Section \ref{sec:merger}, the conventional wisdom depends solely on whether products are gross complements or gross substitutes in the downstream market.\footnote{Indeed, as explained in Section \ref{sec:merger}, ``products should be substitutes for the retailer in terms of profits when they are gross substitutes in demand. This is because the incremental profits from adding two gross substitutes to the retailer’s portfolio should be smaller than the sum of the profits from adding each product individually, as the sales of one product increase when the other is not stocked. By the same reasoning, consumer-level complementarities must also transfer upstream: when products are gross complements in demand, the retailer’s incremental profits from adding both products should be larger than the sum of the profits from adding them individually.'' } Second, it avoids the need to explicitly model consumers’ utility maximization, enabling the framework to accommodate a broader range of settings. \end{myrem2}

\subsection{Main Results} \label{sec:main}

I begin by generalizing Proposition \ref{prop:main} of Section \ref{sec:merger}:

\begin{prop} \label{prop:main2} \color{white} $\qquad$ \color{black} \vspace{-3mm} \begin{itemize}[leftmargin=*,noitemsep]
\item When $n=2$, products 1 and 2 are:  \vspace{-2mm} \begin{itemize}[leftmargin=*,noitemsep]
	\item Strict complements for the retailer in terms of profits if the products are strict gross complements.
	\item Strict substitutes for the retailer in terms of profits if the products are strict gross substitutes.	
	\end{itemize}
\item When $n>2$, products 1 and 2 can: \vspace{-2mm} \begin{itemize}[leftmargin=*,noitemsep]
	\item Be strict complements for the retailer in terms of profits, even when all products are strict gross substitutes.
	\item Be strict substitutes for the retailer in terms of profits, even when all products are strict gross complements.	
	\end{itemize}
\end{itemize} \end{prop}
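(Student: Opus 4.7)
The plan is to treat the two halves of Proposition~\ref{prop:main2} separately: for $n=2$ I would use revealed-preference arguments on the retailer's pre-optimisation profit function, and for $n>2$ I would exhibit explicit counterexamples in each direction.

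For the $n=2$ strict-complements case, let $(q_1^{(10)},0)$, $(0,q_2^{(01)})$, and $(q_1^*,q_2^*)$ denote the retailer's optimal quantities when it carries only product 1, only product 2, and both, respectively. Using the convention $\Pi^*(0,0)=0$, feasibility of $(q_1^{(10)},q_2^{(01)})$ in the two-product problem gives
\begin{equation*}
\Pi^*(1,1)\ \ge\ \bigl(P_1(q_1^{(10)},q_2^{(01)})-c_1\bigr)\,q_1^{(10)}+\bigl(P_2(q_1^{(10)},q_2^{(01)})-c_2\bigr)\,q_2^{(01)}.
\end{equation*}
By the footnote connecting gross complementarity in prices to $\partial P_i/\partial q_j\ge 0$, strict gross complementarity yields $P_1(q_1^{(10)},q_2^{(01)})>P_1(q_1^{(10)},0)$ and $P_2(q_1^{(10)},q_2^{(01)})>P_2(0,q_2^{(01)})$, so the right-hand side strictly exceeds $\Pi^*(1,0)+\Pi^*(0,1)$, which is exactly the desired complementarity. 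For the $n=2$ strict-substitutes case, I would run the symmetric argument in the opposite direction: evaluating the one-product profit functions at $(q_1^*,0)$ and $(0,q_2^*)$ gives lower bounds on $\Pi^*(1,0)$ and $\Pi^*(0,1)$, and strict gross substitutability (now $\partial P_i/\partial q_j\le 0$, strictly) flips all inequalities to deliver $\Pi^*(1,0)+\Pi^*(0,1)>\Pi^*(1,1)$. A small caveat to address is that strict inequalities only survive when the relevant corner quantities are strictly positive at the optima, which is guaranteed by the standing interiority assumption on $\Pi^*$.

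For $n>2$, the ``substitutes for the retailer despite gross complements'' direction is already furnished by the one-stop-shopping construction of Section~\ref{sec:merger}: choose $v_i>0$ and $\pi_i>0$ for $i=1,2,3$ together with a strictly concave CDF $G$ so that inequality (\ref{eq:main}) is violated at $x_3=1$, and embed this reduced-form model in the general framework of Remark~1. The ``complements for the retailer despite gross substitutes'' direction is the genuinely new piece, and I would handle it by exhibiting an explicit demand specification, e.g.\ a three-product logit or a carefully parameterised linear-demand system with an asymmetric cross-price matrix, and directly computing the sign of $\Pi^*(1,1,1)-\Pi^*(0,1,1)-\Pi^*(1,0,1)+\Pi^*(0,0,1)$. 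The intuition guiding the construction would be to make product 2 a much closer substitute for product 3 than for product 1, so that removing product 2 from a lineup containing 3 compresses the retailer's optimal pricing margin on 3 in a way that erodes the marginal contribution of product 1; including product 2 then raises the marginal value of product 1 to the retailer.

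The main obstacle is the second counterexample. Because $\Pi^*(\bx)$ is an envelope obtained by re-optimising retail prices after each portfolio change, signing its mixed differences requires tracking how the removal of product 2 reshapes the first-order conditions for the remaining products. Unlike the $n=2$ case, there is no clean revealed-preference shortcut; and unlike the gross-complements direction (where supermodularity of $\Pi(\bq)$ is preserved under maximisation, as the introduction previews), submodularity of the pre-maximisation objective is not preserved under maximisation, so any counterexample must be verified by direct computation on a concrete demand system rather than inferred from a general structural property.
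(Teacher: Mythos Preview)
Your $n=2$ argument is the quantity-space dual of the paper's price-space proof: the paper evaluates the two-product profit at the single-product optimal \emph{prices} $(p_1^\dagger,p_2^\dagger)$ and uses $D_i(p_1^\dagger,p_2^\dagger)\gtrless D_i(p_i^\dagger,\infty)$ directly from the definition of gross complements/substitutes, whereas you evaluate at the single-product optimal \emph{quantities} and invoke the sign of $\partial P_i/\partial q_j$. Both are correct revealed-preference arguments; the price version is marginally cleaner because strictness comes straight from the definition rather than through the inversion step in the footnote.

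For $n>2$ the paper's route is more economical than yours: a \emph{single} three-product system---equation~(\ref{eq:demandsEx})---delivers both counterexamples by flipping the signs of $(b,\gamma)$. The operative mechanism is curvature of the spillover onto product~3: the square root in $D_3$ makes the joint effect of products 1 and 2 on $D_3$ concave, so with $\gamma>0$ they are substitutes in generating positive spillovers (hence substitutes for the retailer as $b\to 0^+$), and with $\gamma<0$ they are complements in generating negative spillovers (hence complements for the retailer as $b\to 0^-$). Your plan instead uses two separate constructions, and the second is where the genuine gap lies. The one-stop-shopping embedding via Remark~1 is a legitimate alternative for the gross-complements direction, though you gloss over the fact that inequality~(\ref{eq:main}) is stated for exogenous $(\pi_i,v_i)$ and must be re-verified once retail prices are endogenous. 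For the gross-substitutes direction you offer only a heuristic (``make product 2 a closer substitute for 3 than for 1''), which is \emph{not} the paper's mechanism: the paper exploits concavity of the \emph{joint} spillover of 1 and 2 on 3, not asymmetry between the 1--3 and 2--3 cross-effects. Until you actually write down a demand system and compute the sign of $\Pi^*(1,1,1)+\Pi^*(0,0,1)-\Pi^*(1,0,1)-\Pi^*(0,1,1)$, this half of the proof is missing.
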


\begin{proof} See Appendix A.1. \end{proof}

This result is closely related to the findings in Section \ref{sec:merger}. When the retailer carries only products 1 and 2, there are no profit spillovers to other products in the retailer’s lineup. In this case, the relationship between products 1 and 2 in the downstream market solely determines the upstream bargaining effects. In contrast, when the retailer carries additional products, the connection between downstream and upstream complementarities or substitutabilities breaks down, as profit spillovers on other products begin to influence outcomes.

To prove this last part, I construct an example. Specifically, in the proof of Proposition \ref{prop:main2}, I consider an scenario with $n=3$, $c_i=0$ for $i=1,2,3$, and demands given by: \begin{equation} \begin{split} \label{eq:demandsEx} & D_1(\bp)=(1-p_1)+b [ (1-p_2) + (1-p_3)]  \\
&  D_2(\bp)=(1-p_2)+b [ (1-p_1) + (1-p_3)]   \\
&  D_3(\bp)=(1-p_3)+\gamma \sqrt{(1-p_1) + (1-p_2)}  \end{split} \end{equation}

It is clear that if $b>0$ and $\gamma>0$, all three products are strict gross complements, while if $b<0$ and $\gamma<0$, all three products are strict gross substitutes. However, as I show in the proof of the proposition, products 1 and 2 are strict substitutes for the retailer in terms of profits when $\gamma=1/2$ and $b \to 0^+$, and are strict complements when $\gamma=-1/2$ and $b \to 0^{-}$.

Intuitively, the key to this example lies in the decreasing returns exhibited by the complementarities (\(\gamma > 0\)) or substitutabilities (\(\gamma < 0\)) of products 1 and 2 on the demand for product $3$, as reflected by the square root term in the expression for \(D_3(\bp)\). As a result, in the case of gross complements, adding either product 1 or 2 is sufficient to generate most of the positive spillovers to product 3, making products 1 and 2 substitutes for the retailer in terms of profits. Conversely, in the case of gross substitutes, adding products 1 and 2 simultaneously reduces product 3’s demand less than adding both products individually. This makes products 1 and 2 complements for the retailer in terms of profits.\footnote{In this more general setting, the profit spillover of products 1 and 2 on product 3 is formally defined as \(S_3(x_1, x_2) \equiv \pi_3^*(x_1, x_2, 1) - \pi_3^*(0, 0, 1)\), where \(\pi_3^*(x_1, x_2, 1)\) represents the retailer’s profits from product 3 in the optimum, given the portfolio \(x = (x_1, x_2, 1)\). Products 1 and 2 are substitutes in spillovers for product 3 when \(S_3(x_1, x_2)\) has decreasing differences in \((x_1, x_2)\), and they are complements in spillovers for product 3 when \(S_3(x_1, x_2)\) has increasing differences in \((x_1, x_2)\).}

As noted in the Related Literature section, the second part of Proposition \ref{prop:main2} contradicts a formal result from \cite{shaffer2}. In their analysis, they consider a setting with general demands where \(n \geq 2\) upstream suppliers sell their products through a single monopoly retailer, with all products being gross substitutes. On p. 584, they assert that because the products are gross substitutes, they will necessarily be substitutes for the retailer. However, as I show in the Online Appendix, there is a flaw in the proof of this result, which appears on p. 593 of their paper. 

I now present a sufficient condition to ensure that downstream complementarities transfer upstream. Unfortunately, as I explain below, a similar result does not exist for downstream substitutabilities.

\begin{prop} \label{prop:sup} If products are strict \textbf{gross complements} and \textbf{inverse demands are weakly supermodular} in $\bq$, then all products are strict complements for the retailer in terms of profits (i.e., $\Pi^*(\bx)$ is strictly supermodular). \end{prop}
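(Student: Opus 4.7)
The plan is to apply a Topkis-type argument: first verify that the pre-optimisation profit $\Phi(\bq)\equiv\sum_i(P_i(\bq)-c_i)q_i$ is strictly supermodular in $\bq$ on $\mathbb{R}^n_+$, and then transfer this property to $\Pi^*(\bx)=\max_{\bq\in S(\bx)}\Phi(\bq)$ via a lattice-maximisation inequality, where $S(\bx)=\{\bq\ge 0 : q_i=0 \text{ if } x_i=0\}$.

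For the first step, I would compute cross partials. Since the linear cost term is both super- and submodular, it suffices to examine $R(\bq)=\sum_i q_i P_i(\bq)$. A direct calculation yields, for $i\neq j$,
\[ \frac{\partial^2 R}{\partial q_i\partial q_j} \;=\; \frac{\partial P_i}{\partial q_j}+\frac{\partial P_j}{\partial q_i}+\sum_k q_k\,\frac{\partial^2 P_k}{\partial q_i\partial q_j}. \]
Strict gross complementarity makes the first two terms strictly positive (using the sign equivalence between demand and inverse-demand derivatives cited in the footnote of Section \ref{sec:modelgen}); weak supermodularity of each $P_k$ together with $q_k\ge 0$ makes the third term non-negative. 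Hence $\Phi$ has strictly positive mixed partials and is strictly supermodular on $\mathbb{R}^n_+$.

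For the transfer step, fix two portfolios $\bx_1,\bx_2\in\{0,1\}^n$ and let $\bq_1^*,\bq_2^*$ denote their unique interior maximisers. The key observation is that $\bq_1^*\vee \bq_2^*\in S(\bx_1\vee \bx_2)$ and $\bq_1^*\wedge \bq_2^*\in S(\bx_1\wedge \bx_2)$: coordinates forced to zero on the right are zero in both maximisers (for the join) and zero in at least one maximiser (for the meet). Combining feasibility with supermodularity of $\Phi$ yields
\[ \Pi^*(\bx_1\vee \bx_2)+\Pi^*(\bx_1\wedge \bx_2)\;\ge\;\Phi(\bq_1^*\vee \bq_2^*)+\Phi(\bq_1^*\wedge \bq_2^*)\;\ge\;\Phi(\bq_1^*)+\Phi(\bq_2^*)\;=\;\Pi^*(\bx_1)+\Pi^*(\bx_2), \]
establishing weak supermodularity of $\Pi^*$.

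The main obstacle is upgrading to the \emph{strict} inequality required by the proposition. The second inequality above is strict exactly when $\bq_1^*,\bq_2^*$ are componentwise incomparable in $\mathbb{R}^n_+$. Here the interior-maximum assumption from Section \ref{sec:modelgen} is essential: for any incomparable $\bx_1,\bx_2\in\{0,1\}^n$, there exists a coordinate $i$ with $x_{1,i}=1, x_{2,i}=0$ (so $q_{1,i}^*>0=q_{2,i}^*$) and a coordinate $j$ with $x_{1,j}=0, x_{2,j}=1$ (so $q_{1,j}^*=0<q_{2,j}^*$), forcing $\bq_1^*$ and $\bq_2^*$ to be incomparable. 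Strict supermodularity of $\Phi$ then delivers the strict inequality, so $\Pi^*$ is strictly supermodular on $\{0,1\}^n$.
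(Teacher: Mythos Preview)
Your proof is correct and shares the same first step as the paper: both compute the cross partials of the pre-optimisation profit and conclude that $\Phi$ (the paper's $R$) is strictly supermodular in $\bq$, using strict gross complementarity for $\partial P_i/\partial q_j>0$ and weak supermodularity of inverse demands for the third-order terms.

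The transfer step is where your route diverges. The paper fixes a pair $(x_1,x_2)$ and a background portfolio $\bx_{-12}$, maximises out the remaining quantities to obtain a two-variable value function $M(q_1,q_2)$, invokes ``supermodularity is preserved under maximisation'' to conclude $M$ is strictly supermodular, and then compares the four specific points $M(q_1^*,q_2^*)$, $M(q_1^\dagger,0)$, $M(0,q_2^\dagger)$, $M(0,0)$. Your argument instead works directly in the full $n$-dimensional space: for any two incomparable portfolios you take the full maximisers $\bq_1^*,\bq_2^*$, verify that their componentwise join and meet are feasible for $\bx_1\vee\bx_2$ and $\bx_1\wedge\bx_2$, and then apply the strict lattice inequality for $\Phi$ once. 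This is the standard Topkis argument for supermodularity of a value function over a sublattice-valued constraint correspondence; it is more compact because it avoids the intermediate two-variable reduction and the black-box preservation lemma. The paper's approach, on the other hand, is closer to how the result is described in the main text (in terms of increasing differences in a single pair of products) and may be easier to follow for readers thinking product-by-product. Both hinge on the same interiority assumption to deliver strictness.
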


\begin{proof} See Appendix A.2. \end{proof}

According to this proposition, when third-party products are present, downstream complementarities transfer upstream only if a stronger condition than gross complementarity is satisfied. Specifically, if inverse demands are interpreted as marginal valuations, the products must be gross complements, and the marginal valuations must be supermodular.\footnote{If consumers have quasilinear preferences, gross complementarity is a second-order property of the underlying utility function, while supermodularity of inverse demands is a third-order property. Thus, gross complementarity and supermodular inverse demands are related but distinct properties.} An important corollary of this proposition is that under linear demands, products are always complements for the retailer in terms of profits if they are gross complements in the downstream market. This follows from the fact that inverse demands are also linear and therefore weakly supermodular.

The proof of Proposition \ref{prop:sup} leverages the fact that if products are strict gross complements and inverse demands are weakly supermodular in $\bq$, then the retailer's profit function, $R(\bq) \equiv \sum_i (P_i(\bq)-c_i)q_i$, is strictly supermodular in $\bq$: \[ \frac{\partial^2 R}{ \partial q_i \partial q_{j}} = \underbrace{\frac{\partial P_i}{\partial q_j}}_{>0} + \underbrace{\frac{\partial P_{j}}{\partial q_{j}}}_{>0} + \sum_{m=1}^{n} q_m \underbrace{\frac{\partial P_m}{\partial q_i \partial q_j}}_{\ge 0} >0, \text{for any $i \neq j$} \]Since supermodularity is preserved under maximization \citep[][p. 26]{vives1999}, this implies that $M(q_1,q_2) \equiv \max_{\bq_{-12}} R(\bq)$ is supermodular in $(q_1,q_2)$. Defining $(q_1^*,q_2^*)$, $q_1^{\dagger}$, and $q_2^{\dagger}$ as the maximisers of $M(q_1,q_2)$, $M(q_1,0)$, and $M(0,q_2)$, respectively, the supermodularity of $M(q_1,q_2)$ is sufficient to ensure that (see the proof of Proposition \ref{prop:sup} for details): \[ \underbrace{M(q_1^*,q_2^*)}_{\Pi^*(1,1,\boldsymbol{1}_{-12})} - \underbrace{M(0,q_2^{\dagger})}_{\Pi^*(0,1,\boldsymbol{1}_{-12})}  >\underbrace{M(q_1^{\dagger},0)}_{\Pi^*(1,0,\boldsymbol{1}_{-12})}- \underbrace{M(0,0)}_{\Pi^*(0,0,\boldsymbol{1}_{-12})} \]That is, products 1 and 2 are necessarily complements for the retailer in terms of profits when the retailer carries all other products. The proof of Proposition \ref{prop:sup} extends this argument to any other combination of products the retailer may carry (i.e., for any $\bx_{-12} = (x_3,\dots,x_n)$). This demonstrates that, under the stated conditions \(\Pi^*(\bx)\) is strictly supermodular in \(\bx\).

Unfortunately, a similar result does not exist for downstream substitutabilities. Specifically, products are \textbf{not} necessarily substitutes for the retailer in profits, even if they are all strict gross substitutes and inverse demands are weakly submodular in quantities. This highlights an intriguing asymmetry between downstream complementarities and downstream substitutabilities in this context.

To understand the source of this asymmetry, observe that if products are strict gross substitutes and inverse demands are weakly submodular, then \(R(\bq)\) is also strictly submodular in \(\bq\):\[ \frac{\partial^2 R}{ \partial q_i \partial q_{j}} = \underbrace{\frac{\partial P_i}{\partial q_j}}_{<0} + \underbrace{\frac{\partial P_{j}}{\partial q_{j}}}_{<0} + \sum_{m=1}^{n} q_m \underbrace{\frac{\partial P_m}{\partial q_i \partial q_j}}_{\le 0} <0, \text{for any $i \neq j$} \]However, unlike supermodularity, submodularity is not preserved under maximization. As a result, it is possible to construct examples where all products are strict gross substitutes and inverse demands are weakly submodular in \(\bq\), yet some products remain complements for the retailer in terms of profits. I provide one such example in Appendix B.

\section{Conclusions}

In this paper, I challenge the conventional wisdom that upstream mergers of suppliers reduce negotiated prices when their products are complements for final consumers and increase prices when they are substitutes. I show that this logic holds only when the intermediary’s portfolio consists exclusively of the merging suppliers’ products. When the portfolio includes additional products, the merging suppliers’ products may act as substitutes for the intermediary, even if they are complements for consumers, leading the merger to increase negotiated prices. Similarly, the merging suppliers’ products may act as complements for the intermediary, even if they are substitutes for consumers, resulting in a reduction in negotiated prices.

From an antitrust perspective, these insights caution against the overgeneralization that downstream complementarities always mitigate competitive concerns while substitutabilities invariably exacerbate them. Such misconceptions may lead to misguided merger enforcement decisions, either blocking welfare-enhancing transactions or prematurely approving mergers with anti-competitive motivations. A reevaluation of merger assessment criteria is essential to account for the mechanisms uncovered in this paper.

Looking forward, future research could explore how the portfolio effects identified here influence overall market structure and technological adoption, as examined by \cite{inderst}, or their impact on vertical mergers, as studied by \cite{dopper}. These lines of research could deepen our understanding of the interplay between market structure, technology choice, and overall welfare when input prices are negotiated.

\newpage

\begin{appendix}

\begin{center} \Large \textsc{Appendix} \end{center}

\section{Proofs} \label{app:B}

\subsection{Proof of Proposition 2}

For ease of exposition, I split the proof into smaller lemmas:

\begin{lem} When $n=2$, products 1 and 2 are:  \vspace{-2mm} \begin{itemize}[leftmargin=*,noitemsep]
	\item Strict complements for the retailer in terms of profits if the products are strict gross complements.
	\item Strict substitutes for the retailer in terms of profits if the products are strict gross substitutes.	
	\end{itemize} \end{lem}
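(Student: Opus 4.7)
The plan is a direct revealed-preference argument, built around the stand-alone monopoly prices. Let $\bar p_i$ denote the choke price such that $D_i(p_i=\bar p_i,p_{-i})=0$, and let
\[
p_1^{(1)} \in \argmaxA_{p_1}(p_1-c_1)D_1(p_1,\bar p_2), \qquad p_2^{(2)} \in \argmaxA_{p_2}(p_2-c_2)D_2(\bar p_1,p_2)
\]
be the stand-alone monopoly prices for products $1$ and $2$, so that $\Pi^*(1,0)=(p_1^{(1)}-c_1)D_1(p_1^{(1)},\bar p_2)$, $\Pi^*(0,1)=(p_2^{(2)}-c_2)D_2(\bar p_1,p_2^{(2)})$, and $\Pi^*(0,0)=0$. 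Let $(p_1^{**},p_2^{**})$ be the prices solving the full two-product program, so that $\Pi^*(1,1)=(p_1^{**}-c_1)D_1(\bp^{**})+(p_2^{**}-c_2)D_2(\bp^{**})$. By the assumption that the retailer's problem has an interior maximum, $p_i^{(i)},p_i^{**}\in(c_i,\bar p_i)$ for $i=1,2$.

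\textbf{Gross complements case.} The idea is to bound $\Pi^*(1,1)$ from below by evaluating the joint profit at the feasible but possibly suboptimal price vector $(p_1^{(1)},p_2^{(2)})$:
\[
\Pi^*(1,1)\;\ge\;(p_1^{(1)}-c_1)D_1(p_1^{(1)},p_2^{(2)})+(p_2^{(2)}-c_2)D_2(p_1^{(1)},p_2^{(2)}).
\]
Since products are strict gross complements, $D_1(p_1^{(1)},p_2^{(2)})>D_1(p_1^{(1)},\bar p_2)$ and $D_2(p_1^{(1)},p_2^{(2)})>D_2(\bar p_1,p_2^{(2)})$ (using $p_2^{(2)}<\bar p_2$, $p_1^{(1)}<\bar p_1$, and that $\partial D_i/\partial p_j<0$ for $i\neq j$). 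Multiplying each inequality by the strictly positive margin $p_i^{(i)}-c_i$ and adding gives $\Pi^*(1,1)>\Pi^*(1,0)+\Pi^*(0,1)=\Pi^*(1,0)+\Pi^*(0,1)-\Pi^*(0,0)$, which is precisely the definition of strict complementarity of products 1 and 2 for the retailer.

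\textbf{Gross substitutes case.} Now I use the opposite direction of revealed preference, bounding the stand-alone profits from below using the joint-optimum prices:
\[
\Pi^*(1,0)\;\ge\;(p_1^{**}-c_1)D_1(p_1^{**},\bar p_2),\qquad \Pi^*(0,1)\;\ge\;(p_2^{**}-c_2)D_2(\bar p_1,p_2^{**}).
\]
Because products are strict gross substitutes, $D_1(p_1^{**},\bar p_2)>D_1(\bp^{**})$ and $D_2(\bar p_1,p_2^{**})>D_2(\bp^{**})$ (using $p_2^{**}<\bar p_2$, $p_1^{**}<\bar p_1$, and $\partial D_i/\partial p_j>0$ for $i\neq j$). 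The margins $p_i^{**}-c_i$ are strictly positive at an interior optimum, so adding the two inequalities yields $\Pi^*(1,0)+\Pi^*(0,1)>\Pi^*(1,1)$, i.e., strict substitutability for the retailer.

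The main obstacle is mostly bookkeeping: verifying that the margins $p_i^{(i)}-c_i$ and $p_i^{**}-c_i$ are strictly positive (so the inequalities in demand flip the correct way when multiplied) and that $p_2^{(2)}<\bar p_2$, $p_1^{(1)}<\bar p_1$ so the demand inequalities are strict rather than weak. Both follow immediately from the standing assumption that each maximization problem admits a unique interior optimum together with $D_i>0$ on the interior of $[0,\bar p_i]$.
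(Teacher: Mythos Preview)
Your argument is correct and is essentially the same revealed-preference proof the paper gives: in the complements case you evaluate the joint profit at the stand-alone prices and use $\partial D_i/\partial p_j<0$ to push the bound above $\Pi^*(1,0)+\Pi^*(0,1)$; in the substitutes case you evaluate each stand-alone profit at the joint-optimum price and use $\partial D_i/\partial p_j>0$ to push the sum above $\Pi^*(1,1)$. The only cosmetic difference is that the paper writes the ``absent'' price as $\infty$ rather than the choke price $\bar p_i$, and uses $p_i^\dagger$, $p_i^*$ for your $p_i^{(i)}$, $p_i^{**}$.
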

	
\begin{proof} Throughout the proof, I use the convention that \(D_1(p_1, p_2 = \infty)\) represents the demand for product 1 when product 2 is not stocked, and \(D_2(p_1 = \infty, p_2)\) represents the demand for product 2 when product 1 is not stocked.

Suppose first that products are strict gross complements. I want to show that $\Pi^*(1,1)>\Pi^*(1,0) + \Pi^*(0,1)$ (note that $\Pi^*(0,0)=0$ in this case). Let $(p_1^*,p_2^*) \equiv \argmaxA_{(p_1,p_2)}  \sum_{i=1,2} (p_i-c_i)D_i(p_1,p_2)$, and define $p_1^{\dagger}\equiv \argmaxA_{p_1} (p_1-c_1)D_1(p_1,p_2=\infty) $ and $p_2^{\dagger} \equiv \argmaxA_{p_2} (p_2-c_2)D_2(p_1=\infty,p_2)$. By definition of $(p_1^*,p_2^*)$, I have that $\sum_{i=1,2} (p_i^*-c_i)D_i(p_1^*,p_2^*) > \sum_{i=1,2} (p_i^{\dagger}-c_i)D_i(p_1^{\dagger},p_2^{\dagger})$. Moreover, because products are strict gross complements, then $D_1(p_1^{\dagger},p_2^{\dagger})>D_1(p_1^{\dagger},\infty)$ and $D_2(p_1^{\dagger},p_2^{\dagger})>D_2(\infty,p_2^{\dagger})$. Thus, \begin{multline*} \textstyle \Pi^*(1,1) = \sum_{i=1,2} (p_i^*-c_i)D_i(p_1^*,p_2^*) > \sum_{i=1,2} (p_i^{\dagger}-c_i)D_i(p_1^{\dagger},p_2^{\dagger})  \\ \textstyle > (p_1^{\dagger}-c_1)D_1(p_1^{\dagger},\infty)+ (p_2^{\dagger}-c_2)D_2(\infty,p_2^{\dagger}) = \Pi^*(1,0) + \Pi^*(0,1) \end{multline*}implying that the products are strict complements for the retailer in terms of profits.

Suppose instead that products are strict gross substitutes. I aim to show that $\Pi^*(1,1)<\Pi^*(1,0) + \Pi^*(0,1)$. Define $(p_1^*,p_2^*)$, and $p_i^{\dagger}$ for $i=1,2$ as above. I have that $\Pi^*(1,0) = (p_1^{\dagger}-c_1)D_1(p_1^{\dagger},\infty)>(p_1^{*}-c_1)D_1(p_1^{*},\infty) >(p_1^{*}-c_1)D_1(p_1^{*},p_2^*)$, where the first inequality follows from the definition of $p_1^{\dagger}$, and the second inequality because products are strict gross substitutes. Thus, $\Pi^*(1,0) > (p_1^{*}-c_1)D_1(p_1^{*},p_2^*)$. By an analogous argument for product 2, it follows that $ \Pi^*(0,1)> (p_2^{*}-c_2)D_2(p_1^{*},p_2^*)$. Summing these inequalities gives  $\Pi^*(1,0)+ \Pi^*(0,1)> \sum_{i=1,2} (p_i^*-c_i)D_i(p_1^*,p_2^*)=\Pi^*(1,1)$, establishing that the products are strict substitutes for the retailer in terms of profits. \end{proof}	

\begin{lem} When $n>2$, products 1 and 2 can: \vspace{-2mm} \begin{itemize}[leftmargin=*,noitemsep]
	\item Be strict substitutes for the retailer in terms of profits when the products are strict gross complements.
	\item Be strict complements for the retailer in terms of profits when the products are strict gross substitutes.	
	\end{itemize} \end{lem}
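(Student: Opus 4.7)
My approach is to verify the explicit counterexample foreshadowed in the excerpt: the demand system (\ref{eq:demandsEx}), with $(\gamma, b) = (1/2, 0^+)$ delivering strict gross complementarity of all pairs yet retailer-level substitutability of products 1 and 2, and $(\gamma, b) = (-1/2, 0^-)$ delivering strict gross substitutability yet retailer-level complementarity. The first step is to check cross-partials: $\partial D_i / \partial p_j = -b$ for $i \in \{1,2\}$ and $j \neq i$, and $\partial D_3 / \partial p_i = -\gamma/[2\sqrt{(1-p_1)+(1-p_2)}]$ for $i \in \{1,2\}$ (with the obvious restriction when only one of those products is carried). The signs of $b$ and $\gamma$ then yield the claimed gross complementarity or substitutability.

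The main computation is carried out in the limit $b = 0$: since $\Pi^*(\bx)$ depends smoothly on $b$, any strict inequality at $b = 0$ persists for small $|b|$ of the prescribed sign. At $b = 0$, the own-demand terms $p_i(1-p_i)$ for $i = 1, 2$ decouple, and the only cross-term is the spillover on $D_3$ through $s \equiv \sqrt{x_1(1-p_1) + x_2(1-p_2)}$. Optimizing out $p_3$ first gives $p_3^* = (1+\gamma s)/2$, contributing $(1+\gamma s)^2/4$ to profits. This reduces the four relevant values to $\Pi^*(0,0,1) = 1/4$, $\Pi^*(1,0,1) = \Pi^*(0,1,1) = \max_{p_1} \bigl\{ p_1(1-p_1) + (1+\gamma\sqrt{1-p_1})^2/4 \bigr\}$, and $\Pi^*(1,1,1) = \max_{p_1 = p_2} \bigl\{ 2 p_1(1-p_1) + (1+\gamma\sqrt{2(1-p_1)})^2/4 \bigr\}$ (with $p_1 = p_2$ at the optimum by symmetry). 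Each is a one-dimensional problem whose FOC is a cubic in $\sqrt{1-p_1}$.

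The final step is to verify $\Pi^*(1,1,1) - \Pi^*(0,1,1) - \Pi^*(1,0,1) + \Pi^*(0,0,1) < 0$ at $\gamma = 1/2$ and the reverse strict inequality at $\gamma = -1/2$. The underlying mechanism is the strict concavity of $u \mapsto \sqrt{u}$: adding both products 1 and 2 to a portfolio already containing product 3 produces less than twice the spillover obtained by adding either one alone, so for $\gamma > 0$ products 1 and 2 are substitutes in generating the positive spillover on product 3, while for $\gamma < 0$ they are complements in generating the negative spillover. I expect the main obstacle to be computational: the cubic FOCs have no clean closed form, so the sign of the second difference is most directly confirmed numerically. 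A cleaner alternative would be to upper-bound $\Pi^*(1,1,1)$ by evaluating its objective at the $(p_1,p_3)$ optimal for $\Pi^*(1,0,1)$ (duplicated symmetrically for $p_2$), and symmetrically lower-bound it for the complements case, reducing the verification to the explicit inequality $\sqrt{2u} < 2\sqrt{u}$ that captures the $\sqrt{\,\cdot\,}$-concavity directly.
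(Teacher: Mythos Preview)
Your main argument is correct and essentially identical to the paper's: use the demand system (\ref{eq:demandsEx}), verify the cross-partials to pin down gross complementarity/substitutability via the signs of $b$ and $\gamma$, compute the second difference $\Delta=\Pi^*(1,1,1)+\Pi^*(0,0,1)-2\Pi^*(1,0,1)$ numerically at $b=0$ for $\gamma=\pm 1/2$, and extend by continuity to small $|b|$ of the required sign. The only cosmetic difference is that the paper works in quantity space (via the inverse demands $P_1=1-q_1$, $P_2=1-q_2$, $P_3=1-q_3+\gamma\sqrt{q_1+q_2}$), whereas you work in price space; after the substitution $q_i=1-p_i$ the FOCs and numerical values coincide.

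One caution on your ``cleaner alternative.'' Evaluating the $(1,1,1)$ objective at any fixed prices yields a \emph{lower} bound on $\Pi^*(1,1,1)$, never an upper bound. That is exactly what you need in the gross-substitutes case ($\gamma<0$), where you want $\Pi^*(1,1,1)$ to be large; but it points the wrong way in the gross-complements case ($\gamma>0$), where you want $\Pi^*(1,1,1)$ to be small. So the $\sqrt{2u}<2\sqrt{u}$ shortcut can handle at most one of the two directions, and the numerical verification (as in the paper) is still required for the other.
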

	
\begin{proof} I prove this lemma by constructing examples. Suppose that $n=3$, $c_i=0$ for $i=1,2,3$, and that demands are given as in equation (7) of the main text. When $i=1,2$, then $\partial D_i/\partial p_j = -b$ for $j \neq i$, whereas $\partial D_3/\partial p_j = -0.5\gamma /\sqrt{2-p_1-p_2}$ for $j=1,2$. Hence, if $b>0$ and $\gamma>0$, all three products are strict gross complements, while if $b<0$ and $\gamma<0$, all three products are strict gross substitutes.

Let $\Delta \equiv \Pi^*(1,1,1) +\Pi^*(0,0,1)-\Pi^*(1,0,1)-\Pi^*(0,1,1)$. I will show that $\Delta<0$ when $b=0$ and $\gamma=1/2$, which by continuity, implies that $\Delta<0$ when $b>0$ but small and $\gamma=1/2$. Hence, in that case, the three products are strict gross complements, but products 1 and 2 are strict substitutes for the retailer in terms of profits. Similarly, I will show that $\Delta>0$ when $b=0$ and $\gamma=-1/2$, which by continuity, implies that $\Delta>0$ when $b<0$ but small and $\gamma=-1/2$. Hence, the three products are strict gross substitutes, but products 1 and 2 are strict complements for the retailer in terms of profits.

When $b=0$, inverse demands are $P_1(\bq) =1-q_1$, $P_2(\bq)=1-q_2$, and $P_3(\bq)=1-q_3+\gamma \sqrt{q_1+q_2}$. Clearly, $\Pi^*(0,0,1)= \max_{q_3} q_3(1-q_3)=1/4$. Moreover, \[ \Pi^*(1,1,1)= \max_{\bq}\big\{ q_1(1-q_1)+q_2(1-q_2)+q_3(1-q_3+\gamma \sqrt{q_1+q_2})   \big\} \]The optimum for an arbitrary $\gamma$ is $q_1^*=q_2^*$ and $q_3^*= (1+\gamma \sqrt{2q_2^*})/2$, where $q_2^*$ satisfies:\footnote{When $\gamma<0$, this condition might have more than one solution in $\mathbb{R}_{\ge0}$. However, it is straightforward to identify the one that is indeed optimal.} \[ 1-2q_2^* + \frac{\gamma^2}{4} + \frac{\gamma \sqrt{2}}{8 \sqrt{q_2^*}} =0 \]When $\gamma=1/2$, the optimum is $q_1^*=q_2^*=0.589$ and $q_3^*=0.771$, so $\Pi^*(1,1,1; \gamma=1/2)=1.08$. When $\gamma=-1/2$, the optimum is $q_1^*=q_2^*=0.467$ and $q_3^*=0.259$, so $\Pi^*(1,1,1; \gamma=-1/2)=0.565$. 

Finally,  \[ \Pi^*(1,0,1)= \Pi^*(0,1,1)= \max_{(q_i,q_3)}\big\{ q_i(1-q_i)+q_3(1-q_3+\gamma \sqrt{q_i})   \big\} \]The optimum for an arbitrary $\gamma$ is $q_3^*=(1+\gamma \sqrt{q_i^*})/2$, where $q_i^*$ satisfies: \[ 1-2q_i^* + \frac{\gamma^2}{4} + \frac{\gamma}{4 \sqrt{q_i^*}} =0 \]When $\gamma=1/2$, the optimum is $q_i^*=0.611$ and $q_3^*=0.695$, so $\Pi^*(1,0,1; \gamma=1/2)= \Pi^*(0,1,1; \gamma=1/2)=0.721$. When $\gamma=-1/2$, the optimum is $q_i^*=0.437$ and $q_3^*=0.335$, so $\Pi^*(1,0,1; \gamma=-1/2)= \Pi^*(0,1,1; \gamma=-1/2)=0.358$. 

Thus, when $\gamma=1/2$, $\Delta = 1.08+0.25-2 \cdot (0.721)=-0.112<0$, whereas when $\gamma=-1/2$, $\Delta = 0.565+0.25-2 \cdot (0.358)=0.099$. \end{proof}

\subsection{Proof of Proposition 3}

I will show that $\Pi^*(x_1,x_2,\bx_{-12})$ exhibits strict increasing differences in $(x_1,x_2)$ for any $\bx_{-12}$. Since $x_1$, $x_2$, and $\bx_{-12}$ are arbitrary, this immediately establishes that $\Pi^*(\bx)$ has strict increasing differences in any pair $(x_i,x_j)$ with $i \neq j$ (put differently, I can always relabel products so that $i$ is ``product 1'' and $j$ is ``product 2'').

Fix $\bx_{-12}$ and assume, without loss of generality, that among the products in $\bx_{-12}$, the retailer does not carry the last $k \le n-2$ products (note that $\bx_{-12}$ has length $n-2$). If $k=n-2$, then the retailer only carries products 1 and 2. In this case,  $\Pi^*(x_1,x_2,\bx_{-12})$ exhibits strict increasing differences in $(x_1,x_2)$, following the exact same reasoning as in the proof of Proposition 2.

Now consider the case where $k<n-2$. Let $R(\bq) \equiv \sum_{i=1}^{n-k} (P_i(\bq)-c_i)q_i$, and define: \[ \textstyle M(q_1,q_2) \equiv \max_{(q_3,...,q_{n-k})}   R(q_1,q_2,q_3,\dots,q_{n-k},0,\dots,0)  \]Let $(q_1^*,q_2^*)$, $q_1^{\dagger}$, and $q_2^{\dagger}$ be the maximisers of $M(q_1,q_2)$, $M(q_1,0)$, and $M(0,q_2)$, respectively. By assumption, $q_1^*$, $q_2^*$, $q_1^{\dagger}$, and $q_2^{\dagger}$ are all strictly positive. Moreover, by construction, $\Pi^*(1,1,\bx_{-12})=M(q_1^*,q_2^*)$, $\Pi^*(1,0,\bx_{-12})=M(q_1^{\dagger},0)$, $\Pi^*(0,1,\bx_{-12})=M(0,q_2^{\dagger})$, and $\Pi^*(0,0,\bx_{-12})=M(0,0)$. Thus, $\Pi^*(x_1,x_2,\bx_{-12})$ has strict increasing differences in $(x_1,x_2)$ if and only if: \[ M(q_1^*,q_2^*)-M(0,q_2^{\dagger}) >M(q_1^{\dagger},0)- M(0,0)\]

To show this last point, note that because products are strict gross complements, i.e., $\partial D_i/\partial p_j  < 0$ for $ i \neq j$, it follows that $\partial P_i/\partial q_j  > 0$ for $ i \neq j$ \citet[][p. 144]{vives1999}. This observation, combined with the fact that inverse demands are weakly supermodular in $\bq$, imply that: \[  \frac{\partial^2 R}{ \partial q_i \partial q_{j}} = \underbrace{\frac{\partial P_i}{\partial q_i}}_{>0} + \underbrace{\frac{\partial P_{j}}{\partial q_{j}}}_{>0} + \sum_{m=1}^{n-k} q_m \underbrace{\frac{\partial P_m}{\partial q_i \partial q_j}}_{\ge 0} >0, \text{for any $i \neq j$ such that $i\le n-k$ and $j \le n-k$} \]That is, $R(q_1,q_2,q_3,\dots,q_{n-k},0,\dots,0)$ is strictly supermodular in $(q_1,q_2,q_3,\dots,q_{n-k})$. Consequently, since supermodularity is preserved under maximization, \(M(q_1, q_2)\) is strictly supermodular in \((q_1, q_2)\), meaning it has strictly increasing differences in \((q_1, q_2)\).

Now, by the definition of $(q_1^*,q_2^*)$, $M(q_1^*,q_2^*)>M(q_1^{\dagger},q_2^{\dagger})$. Moreover, since $q_1^{\dagger}>0$, $q_2^{\dagger}>0$, and \(M(q_1, q_2)\) is strictly supermodular in \((q_1, q_2)\), then $M(q_1^{\dagger},q_2^{\dagger})>M(q_1^{\dagger},0)+M(0,q_2^{\dagger})-M(0,0)$. Combining both inequalities, $M(q_1^*,q_2^*)>M(q_1^{\dagger},0)+M(0,q_2^{\dagger})-M(0,0)$, or, equivalently, $M(q_1^*,q_2^*)-M(0,q_2^{\dagger}) >M(q_1^{\dagger},0)- M(0,0)$. \hfill $\square$

\section{An Example Illustrating Why There is No Analog to Proposition 3 for Gross Substitutes} \label{app:C}

In the main text, I claimed that it is not possible to derive a result analogous to Proposition 3 for the case of strict gross substitutes. This is because submodularity is not preserved under maximization. Here, I provide a concrete example to illustrate this phenomenon. Specifically, I demonstrate that even if products are strict gross substitutes and inverse demands are weakly submodular in $\bq$, two products in the portfolio can still be complements for the retailer in terms of profits.

Suppose that $n=3$, $c_i=0$ for $i=1,2,3$, and that inverse demands are given by: \begin{equation*} \begin{split} & P_1(\bq)=1-q_1+b \ln(1+q_2)+\alpha q_3  \\
&  P_2(\bq)=1-q_2+b \ln(1+q_1)+\alpha q_3   \\
&  P_3(\bq)=1-q_3+\gamma (q_1+q_2)  \end{split} \end{equation*} Note that $\partial^2 P_i /\partial q_i \partial q_j =0$ for any $j \neq i$, and that $\partial^2 P_i /\partial q_k \partial q_j =0$ for any $ j \neq k \neq i$. Hence, these inverse demands are weakly submodular. Moreover, by the implicit function theorem: \begin{align*} & \textstyle  \frac{\partial D_i}{\partial p_j} = \frac{-(1+q_i)(b+ \alpha \gamma (1+q_j)) }{\varphi}, \ \text{for $i=1,2$ if $j=1,2$ and $j \neq i$} \\
& \textstyle  \frac{\partial D_i}{\partial p_3} =  \frac{-\alpha(1+q_i)(1+q_j+b)}{\varphi}, \ \text{for $i=1,2$}  \\
&  \textstyle  \frac{\partial D_3}{\partial p_i}=  \frac{-\gamma(1+q_i)(1+q_j+b)}{\varphi}, \ \text{for $i=1,2$} \end{align*}where $\varphi \equiv 1-b^2 + q_1+q_2+q_1q_2 -\alpha \gamma ( 2(1+q_1)(1+q_2) + b(2+q_1+q_2) )$. From these expressions, it is easy to see that if $|b| <1$ and $| \alpha | \to 0$, then all three products are strict gross substitutes when $\alpha$, $b$, and $\gamma$ are strictly negative. In contrast, all three products are strict gross complements when $\alpha$, $b$, and $\gamma$ are strictly positive.

Define then $M(q_1,q_2) \equiv \max_{q_3} \sum_{i=1}^{i=3} P_i(\bq)q_i$. Note that: \[ \textstyle \frac{\partial^2}{ \partial q_1 \partial q_2}\left(\sum_{i=1}^{i=3} P_i(\bq)q_i \right)= \frac{b}{1+q_1}+ \frac{b}{1+q_2} \quad \text{and} \quad  \frac{\partial^2}{ \partial q_i \partial q_3}\left(\sum_{i=1}^{i=3} P_i(\bq)q_i \right)=\alpha+\gamma, \ \text{for $i=1,2$} \]Thus, before maximizing its quantities, the retailer's profits are strictly submodular in $\bq$ if $\alpha$, $b$, and $\gamma$ are strictly negative, i.e., when all three products are strict gross substitutes. However, I also have that:\begin{equation*} \label{eq:M} \frac{\partial^2 M}{\partial q_1 \partial q_2} = \frac{\alpha^2}{2} + \alpha \gamma + \frac{b}{1+q_1}+ \frac{b}{1+q_2} + \frac{\gamma^2}{2} \end{equation*}which immediately implies that there exists $\alpha<0$, $b<0$, and $\gamma<0$ such that $\partial^2 M/\partial q_1 \partial q_2 >0$. For example, letting $\alpha<0$ but $\alpha \to 0$, $b=-1/8$, and $\gamma=-4/5$, I have:\[  \frac{\partial^2 M}{\partial q_1 \partial q_2} = \frac{8}{15} - \frac{1}{8(1+q_1)}- \frac{1}{8(1+q_2)} > \frac{7}{100} >0  \]Since $M(q_1,q_2)$ is strictly supermodular in $(q_1,q_2)$ this implies that when the retailer carries product 3, products 1 and 2 are strictly complements for the retailer in terms of profits (see the proof of Proposition 3). Indeed, in this particular case, $\Pi^*(1,1,1)+\Pi^*(0,0,1)-\Pi^*(1,0,1)-\Pi^*(0,1,1) = 0.015>0$.

Finally, note that this issue does not arise when $\alpha>0$, $b>0$, and $\gamma>0$, i.e., when products are strict gross complements. In this case, $ \sum_{i=1}^{i=3} P_i(\bq)q_i$ is always strictly supermodular in $\bq$ and so is $\partial^2 M/\partial q_1 \partial q_2 >0$. Indeed, as shown in Proposition 3, it is because $ \sum_{i=1}^{i=3} P_i(\bq)q_i$ is strictly supermodular in $\bq$ that $M(q_1,q_2)$ is strictly supermodular in $(q_1,q_2)$. 
\end{appendix}

{\bibliographystyle{ecta}
\bibliography{references}}

\end{document}